\documentclass{article}
\usepackage{amssymb}
\usepackage{graphicx}
\usepackage{amsmath}
\usepackage{hyperref}
\usepackage{bm}
\usepackage{MnSymbol}
\usepackage[toc,page]{appendix}
\newtheorem{theorem}{Theorem}

\newtheorem{definition}[theorem]{Definition}

\newtheorem{proposition}[theorem]{Proposition}

\newenvironment{proof}[1][Proof]{\textbf{#1.} }{\ \rule{0.5em}{0.5em}}

\DeclareMathAlphabet{\mathpzc}{OT1}{pzc}{m}{it}

\newcommand {\id}{\mathrm{id}}

\renewcommand {\L}{{\cal L}}
\newcommand {\M}{{\cal M}}

\newcommand {\Q}{{\cal Q}}

\newcommand {\G}{\mathcal{G}}

\renewcommand {\O}{\mathcal{O}}

\newcommand{\susy}{\mathfrak{susy}}

\newcommand{\Tor}{\mbox{{\rm Tor}}}

\newcommand{\Sym}{\mathrm{Sym}}

\newcommand{\GL}{\mathrm{GL}}
\newcommand{\Spin}{\mathrm{Spin}}
\newcommand{\SO}{\mathrm{SO}}
\newcommand{\g}{\mathfrak{g}}

\newcommand{\dbar}{\bar \partial}
\newcommand{\E}{\mathcal{E}}
\newcommand{\so}{\mathfrak{so}}

\newcommand{\ity}{_{\infty}}
\newcommand{\Ad}{\mathrm{Ad}}
\newcommand{\OGr}{\mathrm{OGr}}
\newcommand{\wOGr}{\widetilde{\mathrm{OGr}}}

\newcommand{\p}{\mathfrak{p}}

\newcommand{\gl}{\mathfrak{gl}}
\renewcommand{\sl}{\mathfrak{sl}}

\renewcommand{\P}{\mathbf{P}}
\newcommand{\SL}{\mathrm{SL}}

\newcommand{\U}{\mathrm{U}}
\newcommand{\ddef}{\mathrm{def}}
\newcommand{\s}{\mathfrak{s}}
\renewcommand{\t}{\mathfrak{t}}
\newcommand{\sd}{\partial}

\renewcommand{\H}{\mathrm{H}}
\newcommand{\vv}{V}

\newcommand{\C}{\mathbb{C}}
\newcommand{\R}{\mathbb{R}}
\newcommand{\cP}{\mathcal{P}}
\newcommand{\cR}{\mathcal{R}}
\newcommand{\red}{\mathrm{red}}
\newcommand{\rA}{\mathrm{A}}
\newcommand{\rB}{\mathrm{B}}
\newcommand{\Grv}{\Theta}
\newcommand{\fu}{\mathfrak{u}}

\begin{document}
\title{The odd twistor transform in eleven-dimensional supergravity} 
\author{M.V. Movshev\\Stony Brook University\\Stony Brook, NY, 11794-3651,\\ USA \\ \texttt{mmovshev@math.sunysb.edu}} 

\maketitle

\begin{abstract}
We define a twistor-like transform of the  equations of  eleven-dimensional supergravity.  More precisely these equations are encoded by the CR-structure on the twistor space $\cP^{2\times15+11|8\times2+16}$.  In addition equations of the linearized eleven-dimensional supergravity adapted to the 3-form potential  can be transformed  into the tangential Cauchy-Riemann equation $\dbar \omega=0$ on  $\cP$. 
\end{abstract}

{\bf Mathematics Subject Classification(2010).} \\
main  32L25, secondary 83E50 \\
{\bf Keywords.} Supergravity, pure spinors, twistor transform

\tableofcontents
\section{Introduction}
The classical superspace formulation (\cite{BrinkHowe}, \cite{CremmerFerrara}) makes the supersymmetries manifest, with  a drawback that the fields it encodes  are constrained to satisfy supergravity equations. The proposal  of Cederwall \cite{Cederwall} is supposed to rectify that. Cederwall's approach still depends on the choice of a  background solution of supergravity equations, but the fields in his construction are unconstrained. In  the flat background, the fields   are elements of 
\begin{equation}\label{E:gr}
\Grv=\Grv_{\mathbb{R}^{11|32}}=A\otimes \Lambda[\theta^{1}, \dots,\theta^{32}]\otimes C^{\infty}(\mathbb{R}^{11}).
\end{equation}

The tensor factor $A$ is the  commutative algebra 
\begin{equation}\label{E:Adef}
\mathbb{R}[\lambda^{1},\dots,\lambda^{32}]/(v^i)
\end{equation}
with 
\begin{equation}\label{E:pure}
v^i=\Gamma^{i}_{\alpha\beta}\lambda^{\alpha}\lambda^{\beta}, i=1,\dots, 11, \alpha,\beta=1,\dots,32
\end{equation}
The algebra is graded by the degree in $\lambda^{\alpha}$.
We use  summation convention over repeated indices:
lower-case Greek letters run between $1$ and $32$, lower-case Roman letters have a range between $1$ and $11$, capital  Greek letters  run between $1$ and $8$.
In addition to  variables $\lambda^{\alpha}\in A$ and $\theta^{\alpha}\in  \Lambda[\theta^{1}, \dots,\theta^{32}]$ it is convenient to fix   and coordinates $x^1,\dots,x^{11}$ on $\R^{11}$. 
 The variables $\lambda^{\alpha},\theta^{\alpha}$ transform in a  spinor representation $\s^{\mathbb{R}}_{10,1}$   of the Lorentz group $\Spin(10,1,\mathbb{R})$. The coordinates  $x^i$ on $V^{10,1\ \R}$ transform as    vectors under  $\SO(10,1,\mathbb{R})$. The reader might wish to consult   \cite{Deligne}  for a mathematical introduction to spinors and $\Gamma$-matrices in the Lorentz signature.

The operator 
\begin{equation}\label{E:defd}
D=\lambda^{\alpha}\eta_{\alpha}
\end{equation}
where
\begin{equation}\label{E:vect}
\eta_{\alpha}=\frac{\sd}{\sd \theta^{\alpha}}-\Gamma^i_{\alpha\beta}\theta^{\beta}\frac{\sd}{\sd x^i}
\end{equation}
is a differential in $\Grv$.
According to \cite{CGNN},\cite{CNT},\cite{MemBerkovits} the $D$-cohomology of $\Grv$ coincide with the space of solutions of the linearized equations of eleven-dimensional supergravity in the flat background. 

In order to get a better grasp of the constructions from \cite{Cederwall},  it is desirable to identify  $\Grv$ with a construction  already  known in homological algebra, e.g. complexes of DeRham, Dolbeault, tangential CR complex and such. 

In this paper we construct a manifold $\cP$ which we call the odd twistor transform of  the D=11 supergravity equations (SUGRA). 
This is a super CR-manifold (see \cite{RoslyandSchwarz} for an introduction to CR structures for physicists). We establish a quasi-isomorphism of $\Grv$ and the tangential CR complex of $\cP$.

We emphasize  that CR structures are ubiquitous  in twistor theory \cite{HoweandHartwell} and that their appearance in our work is not surprising.  What is  unexpected is that our twistor transform encodes solutions of SUGRA rather than equations of  conformal supergravity or equations of  anti self-duality, as the conventional  (ambi-)twistor constructions do (cf. \cite{WardandWells}). 

 The odd twistor transform is a modification of the superspace gravity of Brink and Howe \cite{BrinkHowe}.  In this approach a solution of SUGRA on a Lorentz oriented spin-manifold $M^{11}$ is  encoded by a super-extension $\M=\M^{11|32}$ and a rank $(0|32)$ subbundle 
\begin{equation}\label{E:defdist}
F\subset T(\M)
\end{equation} of the tangent bundle. The manifold $\M=\Pi S$ is the total space of the spinor bundle over $M$ with the parity of fibers reversed. 
 If the vector fields $\xi_{1},\xi_{2}$ are  in $F$,  the commutator $[\xi_{1},\xi_{2}]$ might   not be. 
 The Frobenius tensor, or the torsion $T(\xi_{1},\xi_{2})$, is the normal component of $[\xi_{1},\xi_{2}]$. 
$T$ is a map of  vector bundles
\begin{equation}\label{E:torsion}
T:\Lambda^2F\rightarrow N=T(\M)/F
\end{equation}
Bear in mind that since $F$ is odd, $T$ is symmetric.
By the results of \cite{BrinkHowe}, the equations of SUGRA can be  written succinctly as
\begin{equation}\label{E:SG}
T_{\alpha\beta}^i=\Gamma_{\alpha\beta}^i
\end{equation}
where $T_{\alpha\beta}^i$ is the matrix of $T$ in suitable local bases. 
For example, in  flat space-time $\mathbb{R}^{11|32}$ the subbundle $F$ is spanned by vector fields (\ref{E:vect}).
The  odd subbundle $F$ has symmetric inner product $C(\cdot,\cdot)$, which has a skew-symmetric matrix $C_{\alpha\beta}$. The normal bundle $N$ carries a Lorentz metric $g_{ij}$.
 Identity $\det(\Gamma_{\alpha\beta}^iy_i)=q(y)^{16}$ where 
\begin{equation}\label{E:metric}
q(y)=g^{ij}y_{i}y_j
\end{equation}
 defines the  conformal class  $g_{ij}$ (see  \cite{BrinkHowe}  for an explanation of 
 how to fix  $g_{ij}$ in its conformal class and  how to define the  odd symmetric pairing $C_{\alpha\beta}$ on $F$). \footnote{In this note we shall not make a distinction between an orthonormal basis and  a weak orthonormal basis in $N$, that is a collection of  sections $f_i\in N$ such that $g(f_i,f_j)=\pm c \delta_{ij},c>0$. The last notion makes sense when the metric  is defined only up to a scaling factor.}  In the flat case $C_{\alpha\beta}$ comes from the symplectic $\Spin(10,1,\mathbb{R})$-invariant form on $\s^{\mathbb{R}}_{10,1}$.  To define  $\Gamma^{\alpha,\beta, i_1,\dots,i_k}$, we use $g_{ij}$, $C_{\alpha\beta}$ and  (\ref{E:SG}).

The  {\it odd twistor transform}
$\cP=\cP_{\M,F}$ of the SUGRA datum $(\M,F)$ is a  relative Isotropic Grassmannian $\OGr(2,11)_{\M}$. Locally as a real manifold it is a product 
\begin{equation}\label{E:coset}
\R^{11|32}\times \OGr(2,11)=\R^{11|32}\times \SO(11,\R)/\U(2)\times \SO(7,\R).
\end{equation} The group $\SO(n,\R)$ is a compact form of $\SO(n)\overset{\ddef}{=}\SO(n,\C)$.    A point in  $\OGr(2,11)$ is represented by a light-like (or isotropic) two-plane $W$ in the complexified Minkowski space. More formally this can be said as follows. The complexification $V^{11}$ of $\vv^{10,1\ \R}$ is equipped with the complexified inner product $(\cdot,\cdot)$. As an  algebraic variety $ \OGr(2,11)$ is isomorphic to \[\{W\subset \vv^{11}|\dim W=2, (\cdot,\cdot)|_{W}=0 \}.\] To get an equivalent description of  $\OGr(2,11)$, we can  consider  isotropic two-planes in  the complexification of an  Euclidean eleven-dimensional space. This is  unnatural from the standpoint of the gravity theory, but explains why  $\OGr(2,11)$ is a coset space (\ref{E:coset}) and makes the topological structure of $\OGr(2,11)$ more apparent. The superspace $\OGr(2,11)_{\M}$ can be embedded in the projective bundle $\P(\Lambda^2N^{\mathbb{C}})$ using the Pl\"ucker embedding. $\OGr(2,11)_{\M}$   is defined fiber-wise by  equations (\ref{E:Pluckercoord}), which are  written in a local $g_{ij}$-orthonormal basis of $N$. 
The space $\cP=\OGr(2,11)_{\M}$ has  real dimension $(2\times15+11|2\times 8+16)$. 

The manifold $\cP$ has a CR-structure (Definition \ref{D:CR}) defined by means of the complex subbundle of the complexified tangent bundle $H^{1,0}\subset T^{\mathbb{C}}({\cP})$. We begin the explanation of its construction with a remark that the fibers of the projection \begin{equation}\label{E:projectionTwistor}
p:{\cP}\rightarrow \M
\end{equation}
are complex manifolds, which are isomorphic to $\OGr(2,11)$. The space $H^{1,0}_x,x\in {\cP}$, is characterized by the  condition that  the kernel of the  differential  $Dp$ 
\begin{equation}\label{E:twmap}
H^{1,0}_x\overset{Dp}\rightarrow  T^{\mathbb{C}}_{z},\quad z=p(x)
\end{equation}
is the complex tangent space $T_{x}(\OGr(2,11))=T^{vert}_{x} $ to the fiber $p^{-1}(z)$ at $x$. 
The image  $(Dp)H^{1,0}_x$  is spanned by the complex vector fields
\begin{equation}\label{E:distgengen}
\xi_{\beta}=\bar{a}^{ij}\Gamma_{\beta ij}^{\alpha }\eta_{\alpha}.
\end{equation}
The variables $a^{ij}$ are the Pl\"ucker  coordinates (\ref{E:Plucker}) of the isotropic two-plane $W$ corresponding to  $x\in p^{-1}(z)\cong \OGr(2,11)$, and  $\{\eta_{\alpha}\}$ is a basis in $F_{z}$. 
Complex conjugation on $\Lambda^2N^{\mathbb{C}}$ defines an involution $\rho$ on  $\cP$. 

Here is  our first result about $\cP$.
\begin{proposition}
Let $(\M,F)$ be a real $(11|32)$-dimensional supermanifold, such that the Frobenius tensor of rank $(0|32)$distribution $F$
 satisfies (\ref{E:SG}).
 Then the CR structure $H^{1,0}$ given by (\ref{E:twmap}) on the relative Isotropic Grassmannian $\cP_{\M,F}=\OGr(2,11)_{\M}$ is integrable. The complex involution $\rho$ on $\cP_{\M,F}$ maps $H^{1,0}$ to $H^{0,1}$ and  $H^{1,0}\cap H^{0,1}=\{0\}$.
 \end{proposition}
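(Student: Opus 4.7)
The plan is to argue locally on a trivializing chart where $\cP \cong \M \times \OGr(2,11)$, and to describe $H^{1,0}$ by explicit generators: the vertical $(1,0)$-tangent bundle $T^{1,0}_{vert}$ of the fiber, together with lifts $\tilde\xi_\beta = \bar a^{ij}\Gamma^\alpha_{\beta ij}\hat\eta_\alpha$ of the vector fields (\ref{E:distgengen}), where $\hat\eta_\alpha$ denotes the $p$-horizontal (product-connection) extension to $\cP$ of a local frame $\eta_\alpha$ of $F$. With this description, involutivity reduces to checking brackets on pairs of generators.

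Two of the three cases will be immediate. The bracket $[T^{1,0}_{vert}, T^{1,0}_{vert}]$ closes because the fibers are honest complex manifolds. The bracket $[T^{1,0}_{vert}, \tilde\xi_\beta]$ vanishes because a $(1,0)$-vector along the fiber annihilates the antiholomorphic coefficients $\bar a^{ij}$, while the horizontal lifts $\hat\eta_\alpha$ commute with all vertical vectors by construction.

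The essential computation, and the main technical obstacle, is $[\tilde\xi_\beta, \tilde\xi_\gamma]$. Since $\bar a^{ij}$ depends only on fiber coordinates and $\hat\eta_\alpha$ differentiates only along $\M$, the Leibniz cross terms drop out and the bracket equals $\bar a^{ij}\bar a^{kl}\Gamma^\alpha_{\beta ij}\Gamma^\delta_{\gamma kl}[\hat\eta_\alpha,\hat\eta_\delta]$. I would split $[\hat\eta_\alpha,\hat\eta_\delta]$ into its $N$-part, which by hypothesis (\ref{E:SG}) equals $\Gamma^m_{\alpha\delta}\partial_m$, and an $F$-part $T^\epsilon_{\alpha\delta}\hat\eta_\epsilon$. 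Vanishing of the $N$-component of the full bracket then reduces to the Clifford-algebra identity $(\bar a\cdot\Gamma)\,\Gamma^m\,(\bar a\cdot\Gamma) = 0$, valid for any isotropic two-form $\bar a$; I would prove this directly by writing $\bar a\cdot\Gamma = 2\tilde{\bar u}\tilde{\bar v}$ for a basis $u,v$ of the associated isotropic two-plane $W$ and invoking $\tilde{\bar u}^2 = \tilde{\bar v}^2 = 0$, $\{\tilde{\bar u},\tilde{\bar v}\} = 0$. The $F$-component lands in $F^\C$ and must be shown to be a combination of the $\tilde\xi_\mu$'s, i.e., the tensor $A^\alpha_\beta A^\delta_\gamma T^\epsilon_{\alpha\delta}$ with $A = \bar a\cdot\Gamma$ must take values in $\Im(A)$ in the $\epsilon$-slot; I would handle this by identifying $\Im(A)\subset F^\C$ as the $\bar W$-annihilator in the spinor representation and checking that it is preserved when sandwiched between two copies of $A$, which is the step that uses the isotropic structure most delicately.

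The involution statement is a direct consequence of $\rho$ being the complex conjugation on $\Lambda^2 N^\C$: it exchanges $a^{ij}\leftrightarrow\bar a^{ij}$, conjugates the vertical complex structure on the fiber, and sends $\tilde\xi_\beta$ to $a^{ij}\Gamma^\alpha_{\beta ij}\hat\eta_\alpha$, so together with $T^{0,1}_{vert}$ it generates $H^{0,1}=\overline{H^{1,0}}$. Finally, $H^{1,0}\cap H^{0,1}=\{0\}$ splits into a vertical part, trivial since $\OGr(2,11)$ is a genuine complex manifold, and a horizontal part; the latter follows from the transversality of the conjugate complex two-planes $W$ and $\bar W$ in $V^\C$ forced by the compact coset presentation (\ref{E:coset}), which translates into $\Im(\bar a\cdot\Gamma)\cap\Im(a\cdot\Gamma) = \{0\}$ inside $F^\C$.
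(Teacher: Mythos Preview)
Your approach is essentially that of the paper: work on local generators, check brackets pairwise, split $[\tilde\xi_\gamma,\tilde\xi_\delta]$ into its $N$- and $F$-components, kill the $N$-part, and claim the $F$-part stays in $H^{1,0}$. The one genuine methodological difference is in the $N$-part. The paper argues that $P^s(\bar a)=\Gamma^s_{\alpha\alpha'}A^{\alpha}_{\beta}A^{\alpha'}_{\beta'}$ vanishes because it would have to be an $\SO(11)$-vector sitting inside $\Sym^2\Lambda^2 V^{11}$, and no such summand exists; you instead compute directly in the Clifford algebra, writing $\bar a\cdot\Gamma=2\tilde{\bar u}\tilde{\bar v}$ and using nilpotency. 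Both are valid; your argument is more elementary and self-contained, the paper's is shorter once one accepts the branching fact from \cite{MovGr}.

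There is, however, a real gap in your treatment of the $F$-component. You correctly isolate the requirement: the vector $A^{\alpha}_{\gamma}A^{\beta}_{\delta}T^{\epsilon}_{\alpha\beta}\hat\eta_\epsilon$ must lie in $\Im A$. But your proposed justification --- identify $\Im A$ with the $\bar W$-annihilator in $\s_{11}$ and check it is ``preserved when sandwiched between two copies of $A$'' --- cannot work as written. The hypothesis (\ref{E:SG}) constrains only the $N$-part of the bracket; the $F$-torsion $T^{\epsilon}_{\alpha\beta}$ is an arbitrary (frame-dependent) symmetric map $\s\times\s\to\s$, and for generic such $T$ the inclusion $T(\Im A,\Im A)\subset\Im A$ is false (e.g.\ take $T^{\epsilon}_{\alpha\beta}=\delta^{\epsilon}_{1}s_{\alpha\beta}$ with $s$ nondegenerate on $\Im A$). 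No amount of Clifford-algebra structure on $\Im A$ alone can force this. The paper itself is terse at exactly this point --- it simply asserts that ``the remaining terms in (\ref{E:commutator}) are sections of $H^{1,0}$'' --- so you have located the genuinely delicate step, but not closed it.

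A smaller point: your argument for $H^{1,0}\cap H^{0,1}=\{0\}$ relies on transversality $W\cap\overline W=\{0\}$, which you read off from the compact coset (\ref{E:coset}). In the Lorentzian picture relevant here this transversality fails on the codimension-seven orbit $O_{3,2}$ (see Appendix~\ref{S:orbits}), where $\dim(W\cap\overline W)=1$. The paper does not attempt this directly either; it defers the nondegeneracy to \cite{MovGr} and separately argues that $\rho$ is fixed-point free from the absence of real isotropic $2$-planes in signature $(10,1)$.
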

 See  Section \ref{E:P} for the  proof and discussion.
 The inverse transform $\cP\Rightarrow (\M,F)$ is  defined if $\cP$ satisfies conditions of the following theorem (see Section \ref{S:inverse} for details).
\begin{proposition}
Let $\cP$ be a globally embeddable  (see Definition \ref{D:embeddable}) $(2\times15+11|2\times8+16)$-dimensional super CR manifold. Suppose that $\cP$ satisfies conditions (\ref{E:data1},\ref{E:data2},\ref{E:data2.5},\ref{E:data3}) in Section \ref{S:inverse}. Then $\cP$ is isomorphic to the odd twistor transform of some $(\M,F)$.
\end{proposition}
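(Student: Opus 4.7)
The approach reverse-engineers the forward construction: from $\cP$ I would extract first a vertical complex foliation whose leaf space is $\M$, then an odd distribution $F \subset T\M$ from the CR data, and finally verify (\ref{E:SG}) and the CR isomorphism $\cP \cong \OGr(2,11)_{\M}$. The first step is to isolate a complex sub-distribution $\V \subset H^{1,0}$ of even complex rank $15$ (and no odd part), playing the role of the vertical $(1,0)$-tangent space along the hypothetical fibers. A dimension count makes this natural: since $\cP$ has real dimension $(41|32)$, the CR bundle $H^{1,0}$ has complex rank $(15|16)$, with the $15$ even dimensions serving as vertical and the $(0|16)$ quotient matching the odd span in (\ref{E:distgengen}). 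The real distribution $\V + \rho(\V)$ must be involutive with compact complex leaves biholomorphic to $\OGr(2,11)$. Conditions (\ref{E:data1})--(\ref{E:data3}) are what pin down $\V$ --- whether by singling it out as the Levi-degenerate directions inside $H^{1,0}$, or by supplying a Pl\"ucker-type embedding of $\cP$ into $\P(\Lambda^2 E)$ for some rank-$11$ real vector bundle $E$. Global embeddability then promotes the leaf space to an honest supermanifold $\M$ of real dimension $(11|32)$ with a projection $p:\cP\to\M$.

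Next I would construct $F$. The differential $Dp$ sends $H^{1,0}$ to a subspace of $p^{*}T^{\C}\M$ whose odd part has complex rank $16$; combined with the conjugate image $Dp(H^{0,1})$, these span an odd subspace of $T^{\C}\M$ of complex rank $32$. Since $\rho$ exchanges $H^{1,0}$ with $H^{0,1}$ and covers the identity on $\M$, the sum carries a natural real structure whose fixed locus is a rank $(0|32)$ odd distribution $F\subset T\M$. To verify (\ref{E:SG}), lift sections of $F$ to real combinations of sections of $H^{1,0}$ and $H^{0,1}$ on $\cP$, take their commutators, and project modulo $\V+\rho(\V)$; the integrability of $H^{1,0}$ (from the preceding proposition) together with the Pl\"ucker identities cutting $\OGr(2,11)$ out of $\P(\Lambda^2 N^{\C})$ translate into $T^{i}_{\alpha\beta}=\Gamma^{i}_{\alpha\beta}$ in any $g$-orthonormal frame of $N=T\M/F$. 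With $(\M,F)$ in hand, the identification $\cP\cong\OGr(2,11)_{\M}$ is built from the Pl\"ucker-type embedding supplied by (\ref{E:data3}), which gives a fiberwise bijective map respecting the CR structure defined by (\ref{E:twmap}).

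The principal obstacle is the first step: characterizing $\V$ intrinsically from abstract CR data. A generic CR manifold of the right dimension carries no canonical foliation with $\OGr(2,11)$-fibers, so the hypotheses (\ref{E:data1})--(\ref{E:data3}) must do substantial work --- picking out the complex foliation, fixing its fiber type as $\OGr(2,11)$ (and not some other $15$-dimensional complex manifold), and coherently identifying the normal bundle $N$ with an $11$-dimensional bundle equipped with the Lorentz metric $g_{ij}$ and spinor pairing $C_{\alpha\beta}$. Global embeddability plays a twofold role: it guarantees that the leaf space exists as a smooth supermanifold, and it permits the Pl\"ucker embedding to be realized concretely rather than only in a formal neighborhood.
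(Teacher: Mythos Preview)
Your proposal diverges from the paper's argument in both method and detail, and it contains a rank error that undermines the second step.

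\textbf{Method.} The paper does not build $\M$ as a leaf space of a foliation on $\cP$. Instead it uses the global embedding $\cP\subset\cP_{\mathbb C}$ and applies Kodaira--Vaintrob deformation theory to the compact submanifold $\OGr(2,11)\subset\cP_{\mathbb C}$ supplied by condition (\ref{E:data2}). The tangent space to the moduli is $\H^0(N_{\OGr(2,11)})\cong V^{11}+\Pi\s_{11}$ and the obstruction space $\H^1(N_{\OGr(2,11)})$ vanishes, so a versal family $\M^{\mathbb C}$ of dimension $(11|32)$ exists; $\M$ is its $\rho$-real locus. The distribution $F$ is then read off the universal family $\Q\subset\cP^{\mathbb C}\times\M^{\mathbb C}$: the $r$-vertical tangent spaces push forward under $p$ and, as the point varies over a fiber, sweep out $F$. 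This is why global embeddability is assumed---it is the input to Kodaira theory, not (as you suggest) a device for making a leaf space Hausdorff.

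\textbf{The rank error.} You assert that $H^{1,0}$ has complex rank $(15|16)$. It does not: condition (\ref{E:data1}) and the exact sequence (\ref{E:exact}) give rank $(15|8)$, the $8$ being $\dim\s_W$ (see (\ref{E:imageW})). Consequently $Dp(H^{1,0}_x)$ is only $8$-dimensional at each point, and $Dp(H^{1,0}_x)+Dp(H^{0,1}_x)$ has rank at most $16$, not $32$. The paper recovers the full $32$-dimensional $F_z$ only by letting $x$ range over the whole fiber $p^{-1}(z)$ (equivalently, by varying $W\in\OGr(2,11)$), which is exactly what the universal-family construction does.

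\textbf{The hypotheses.} Your ``principal obstacle''---extracting $\V$ from abstract CR data---is not the actual difficulty: condition (\ref{E:data2}) already \emph{postulates} a family of CR-holomorphic $\OGr(2,11)$'s with prescribed normal bundle, so there is nothing to isolate. The genuine work, and the place where condition (\ref{E:data2.5}) is used, is verifying the torsion constraint (\ref{E:SG}). The paper does this cohomologically: from $\dbar_{H^{0,1}}^2 p^*x^i=0$ one gets $dE^i=c^i_j\Gamma^j_{\alpha\beta}d\theta^\alpha d\theta^\beta+G^i_jE^j$, and invertibility of $(c^i_j)$ is deduced by comparing two computations of $\Tor_1^{\Sym[\s_{11}]}(A,\mathbb C)$, with (\ref{E:data2.5}) guaranteeing that the classes $\gamma^i$ form a basis. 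Your proposed route via ``Pl\"ucker identities'' does not make contact with this step.
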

In this Proposition conditions (\ref{E:data1},\ref{E:data2},\ref{E:data2.5},\ref{E:data3})  seem to be essential conditions. It is desirable to get rid of the global embeddability because it is not intrinsic to the CR nature of the problem. 

 {\it The tangential Cauchy-Riemann  complex} (cf. \cite{Boggess} and (\ref{E:CRdef})) $ \Omega_{H^{0,1}}= \bigoplus_{p\geq 0} \Omega^{0,p}_{H^{0,1}}$ is an  analogue of the Dolbeault complex for  CR (super)manifolds.
A generalization $\Grv_{\M,F}$ of the complex (\ref{E:gr}) can be defined for a non-flat space-time $\M$ and a distribution $F$ (see Section \ref{S:structure} and \cite{BonoraandPastiandTonin}, \cite{BerkovitsandHoweint} for  details and further development).

The map  (\ref{E:projectionTwistor}) induces a homomorphism of differential graded algebras 
\begin{equation}\label{E:ldefm}
p_{H_{0,1}}^*:\Grv_{\M,F}\rightarrow \Omega_{H^{0,1}}
\end{equation} 
(see Section \ref{S:structure} for details). 
Note that  $\cP_{\M,F}$ has smaller dimension  than the space underlying $\Theta_{\M,F}$. In this sense $\cP_{\M,F}$ gives a more economical description of SUGRA.

Our main result is the comparison of the cohomologies of $\Grv_{\M,F}$ and $\Omega_{H^{0,1}}(\cP)$ (see the end of Section \ref{invariantCRstr} and Section \ref{S:reality}):
\begin{proposition}\label{P:quasiisomorphism}The map $p_{H_{0,1}}^*$ defines an isomorphism between the $D$-cohomology of $\Grv_{\M,F}$ and the $\rho^*$-real $\dbar_{H^{0,1}}$-cohomology  of  $\Omega_{H^{0,1}}(\cP)$. \end{proposition}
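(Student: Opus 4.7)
The plan is to prove the quasi-isomorphism via a Leray-type spectral sequence associated with the projection $p:\cP \to \M$. First I would filter the tangential CR complex $\Omega_{H^{0,1}}(\cP)$ by vertical degree along $p$, separating $(0,q)$-form components tangent to the fibers of $p$ from those with a horizontal leg. Each fiber $p^{-1}(z)\cong \OGr(2,11)$ is a compact complex homogeneous manifold of complex dimension $15$, and the purely vertical part of $\dbar_{H^{0,1}}$ restricts on each fiber to the ordinary Dolbeault operator. The horizontal part of $H^{0,1}$, whose image under $Dp$ is built from the complex conjugates of the operators in (\ref{E:distgengen}), is precisely what will assemble into the pure spinor differential $D$ on the $E_2$-page.

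The second step is the computation of the $E_1$-page, i.e.\ the fiberwise Dolbeault cohomology. The coefficients live in homogeneous vector bundles over $\OGr(2,11)$ induced by the $\theta$-variables and by the weights carried by the horizontal directions. The key representation-theoretic input is Borel-Weil-Bott applied to the compact form $\SO(11,\R)/\U(2)\times\SO(7,\R)$: one establishes that for the relevant homogeneous bundles the fiberwise Dolbeault cohomology is concentrated in degree zero, and that the degree-zero parts, assembled across the coefficient tower, reconstruct the pure spinor algebra $A=\R[\lambda^\alpha]/(v^i)$ of (\ref{E:Adef}). This concentration is exactly what forces the spectral sequence to collapse at $E_2$ with coefficients $A\otimes \Lambda[\theta]\otimes C^\infty(\M)$---the underlying module of $\Grv_{\M,F}$.

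Next I would identify the $E_2$ differential with $D$. Under the Pl\"ucker embedding the homogeneous coordinates $a^{ij}$ correspond to quadratic pure spinor bilinears $\lambda\Gamma^{ij}\lambda$; via the spinor coefficient bundle on $\OGr(2,11)$ this lifts to a dictionary between the generators of $\bigoplus_n \H^0(\OGr(2,11),\mathcal{F}_n)$ and the degree-one pure spinor symbols $\lambda^\alpha$. Under this dictionary the horizontal operator in $\dbar_{H^{0,1}}$ becomes $\lambda^\alpha\eta_\alpha=D$. The SUGRA constraint (\ref{E:SG})---already invoked to prove integrability of $H^{1,0}$ in the preceding proposition---is exactly what guarantees $D^2=0$ on the base, as well as the vanishing of all higher differentials $d_r$ for $r\geq 3$. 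Finally, the antiholomorphic involution $\rho$ swaps $H^{1,0}$ and $H^{0,1}$, so its action on $\Omega_{H^{0,1}}(\cP)$ cuts out the $\rho^*$-real subcomplex; applying the spectral sequence argument to this subcomplex identifies its cohomology with the real $D$-cohomology of $\Grv_{\M,F}$.

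The main obstacle is the representation-theoretic identification in the second step: showing concretely that the fiberwise Dolbeault cohomology on $\OGr(2,11)$ of the coefficient bundles that actually appear is concentrated in degree zero and assembles into the pure spinor algebra with the quadratic relations $\Gamma^i_{\alpha\beta}\lambda^\alpha\lambda^\beta=0$. One must carefully track how the spinor representation of $\Spin(11)$ restricts along the Pl\"ucker embedding $\OGr(2,11)\hookrightarrow\P(\Lambda^2 \vv^{11})$, so that the Koszul-type relations produced by Borel-Weil-Bott match exactly the ideal $(v^i)$. Once this identification is secured, the collapse of the spectral sequence, the comparison of differentials, and the descent to $\rho^*$-real classes are essentially formal consequences.
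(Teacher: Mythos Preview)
Your proposal is correct and follows essentially the same route as the paper. The paper decomposes $\dbar_{H^{0,1}}=d_I+d_{II}$ via the bigrading $(\deg_{d\bar a},\deg_\mu)$, computes the $d_I$ (fiberwise Dolbeault) cohomology using the externally quoted identification $\H^k(\OGr(2,11),\Sym^i\s^*_{\OGr(2,11)})=A_i\,\delta_{k,0}$---exactly your Borel--Weil--Bott step---then identifies the residual $d_{II}$-computation with the $D$-cohomology of $\Grv_{\M,F}$; globally this is packaged as the spectral sequence of the filtration by powers of $J=\ker\bigl(\Omega_{H^{0,1}}(\cP)\to\Omega_{H^{0,1}}(\cP_{\red})\bigr)$, which degenerates at $E_2$ for the same reason, and $\rho$-reality cuts out the real coefficients just as you describe.
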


Recall that $D$-cohomology of $\Grv_{\M,F}$ have an interpretation of solutions of linearized equations of SUGRA.
It would be interesting in light of this result to explore the possibility of a formulation of the full nonlinear equations   on $\cP$. 

We conclude the introduction with a list of related problems.
\begin{enumerate}
\item The action  of  linearized gravity theory in the pure spinor approach \cite{MemBerkovits} has the form $S=\int d^{11}x\langle\Psi Q\Psi\rangle$,  where the norm $\langle \rangle$ is  such that $\langle\lambda^7\theta^9 \rangle=1 $. Proposition \ref{P:quasiisomorphism} can be interpreted as a statement about an isomorphism of the space of solutions of the equations of  linearized supergravity $Q\Psi=0$   and space of solutions of $\dbar_{H^{0,1}}f=0, f\in \Omega_{H^{0,1}}(\cP)$.  It is plausible  that  $p_{H_{0,1}}^*$ defines an equivalence of the actions $S$ and $\int_{\cP} d\mu f\dbar_{H^{0,1}} f $, where $d\mu$ is some integral volume form on $\cP$ derived from the norm $\langle \rangle$.    It would be interesting to find $d\mu$ using ideas of \cite{MovshevQ}, \cite{MasonandSkinner}.
The next problem is closely related.

\item The work \cite{Cederwall} gives a description of the supergravity Lagrangian $\L_{\mathrm{SUGRA}}$ in a superspace formulation with  auxiliary pure spinor fields. Some of the terms of $\L_{\mathrm{SUGRA}}$ (such as $\Gamma_{\alpha\beta }^{ij}\lambda^{\alpha}\lambda^{\beta}$) are sections of a bundle  
on $\OGr(2,11)$. It is tempting to speculate that the Lagrangian can be defined   on  $\cP$.   The idea is to interpret the bracket  $\{,\}$ defined by the formula  \[\Psi\{\Psi,\Psi\}  \overset{\ddef}{=}  \lambda\Gamma_{ab}\lambda\Psi R^{a}\Psi R^{b}\Psi, \] taken from  the full supergravity Lagrangian \cite{Cederwall} as a weak Poisson structure (a G$\ity$-structure with a trace in the mathematical slang). If this guess is correct,  the technique of \cite{MArkl} can be used to transfer the G$\ity$-structure to $\Omega_{H^{0,1}}(\cP)$.
\item  SUGRA is a low energy limit of M-theory. It is believed that M-theory properties are related to the supermembrane  \cite{HughesandLiuandPolchinski} \cite{BergshoeandSezginandTownsend}\cite{DuffandoweandInamiandStelle}. Pure spinors play a fundamental role in the covariant formulation of the supermembrane \cite{MemBerkovits}. 
It is interesting to translate supermembrane from the superspace to the twistor space. One of the attractive feature of twistors is that 
the polynomials $a^{ij}=\Gamma_{\alpha\beta}^{ ij}\lambda^{\alpha}\lambda^{\beta}$ after the blowup become  basic generators \cite{MovGr}.  The nonlinear constraint  $\lambda\Gamma_{ ij}\lambda \Pi_{J}^j=0$ \cite{MemBerkovits}, where $\Pi_{J}^j$  is the canonical momentum,  simplifies to
\begin{equation}\label{quastion3}
a^{ij}\Pi_{J}^j=0.
\end{equation} 
It would be interesting to systematically apply the odd twistor transform  to the supermembrane and its double reduction - strings.
\end{enumerate}
We plan to address these questions in future publications.

Here is an outline of the structure of the paper.  In Section \ref{E:P} we establish integrability  of the CR structure of $\cP$.  In Section \ref{S:structure} we define the tangential CR complex $\Omega_{H^{0,1}}(\cP)$ and  the non-flat generalization  $\Grv_{\M,F}$  of  the complex (\ref{E:gr}). In the same section we also define the map $p^*_{H^{0,1}}$ between these complexes. Reality conditions, which are used later to characterize physical fields,  are formulated in Section \ref{S:reality}. It is known that not every  holomorphic supermanifold admits a projection onto the underlying manifold. Supermanifolds having this property are called split. 
In Section \ref{invariantCRstr} we define  obstruction of being split in the context of CR manifolds that are odd twistor transforms. We also establish that the map $p^*_{H^{0,1}}$ defines an isomorphism on cohomology. In Section \ref{S:inverse} we invert the odd twistor transform under certain assumptions of analyticity. 
Section \ref{S:modCR} briefly describes an interesting even modification of the CR structure on $\cP$.  The appendices contain discussion of some  technical points. 
In particular, in Appendix \ref{S:adjoint} we justify the local description of the map $p^*_{H^{0,1}}$. 
The Pl\"ucker embedding of $\OGr(2,11)$ is characterized by equations in Appendix \ref{S:Plucker}. 
Orbits of $\SO(10,1,\R)$ in $\OGr(2,11)$ are listed in Appendix \ref{S:orbits}. The super-Poincar\'e group acts on the odd twistor transform $\cP$ of the flat solution of SUGRA.  The group preserves the CR structure and  has a dense orbit in $\cP$.  In Appendix \ref{S:flat} we give a  Lie algebraic description of the CR structure on this orbit, considered as a homogenous space.

The author would like to thank P. Howe, D. Hill, C. LeBrun, A. Schwarz and W. Siegel  for  useful comments.


\section{Integrability of the CR structure on  $\cP_{\M,F}$}\label{E:P}
We devote this section to the proof of integrability of the CR structure  on $\cP_{\M,F}$. But first we give a formal definition of the  CR structure.
 \begin{definition}\label{D:CR}(cf. \cite{Boggess})
Let $Y$ be a $C^{\infty}$ super-manifold, equipped with a subbundle $H^{1,0}$ of the complexified tangent bundle $T^{\mathbb{C}}=T^{\mathbb{C}}(Y)$.  If $H^{1,0}\cap \overline{H^{1,0}}=0$(we shall call it a nondegeneracy condition), then $Y$ is a  Cauchy-Riemann (CR) manifold.
 If the space of sections in $H^{1,0}$ (or in $H^{0,1}=\overline{H^{1,0}}$) is closed under the commutator (we shall call it an  involutivity condition), then the CR structure  is integrable.
\end{definition}

Verification of the nondegeneracy condition is done in \cite{MovGr}.
Let us  check integrability of $H^{1,0}$. The vector fields $\xi_{\beta}$  commute with the local vertical holomorphic vector fields  in notations of  (\ref{E:distgengen}).
Locally we decompose  the tangent bundle $T(\M)$ into a direct sum $F+N$. With  a suitable choices of local bases $\{\eta_{\alpha}\}$ of $F$ and $\{\upsilon_i\}$ of $N$ the commutators $[\eta_{\alpha},\eta_{\beta}]$ decompose into $\eta_{\alpha\beta}+\Gamma_{\alpha\beta}^i\upsilon_i$, where $\eta_{\alpha\beta}$ are some sections of $F$.  The commutator of the vector fields $\xi_{\gamma}$ (\ref{E:distgengen}) is 
\begin{equation}\label{E:commutator}
[\xi_{\gamma},\xi_{\delta}]
=\bar{a}^{ij}\Gamma_{\gamma ij}^{\alpha }\left(\bar{a}^{kl}\Gamma_{\delta kl}^{\beta }\eta_{\alpha\beta}\right)+\bar{a}^{ij}\Gamma_{\gamma ij}^{\alpha }\bar{a}^{kl}\Gamma_{\delta kl}^{\beta }\Gamma_{\alpha\beta}^i\upsilon_i.
\end{equation}
The $N$-component has coefficients
\begin{equation}\label{correctnessofhom}
P^s(\bar{a})=\Gamma^{s}_{\alpha\alpha'} \bar{a}^{ij}\Gamma_{\beta ij}^{\alpha }\bar{a}^{kl}\Gamma_{\beta' kl}^{\alpha'}
\end{equation}
 These coefficients  are zero because $P^s({a})$ transforms as a 
 $\SO(11)$ vector. However 
 a vector representation is not a subrepresentation  of $\Sym^2[\Lambda^2 V^{11}]$\footnote{In this paper $\Sym^i E$  stands for the $i$-th symmetric  power of a representation or a vector bundle.} \cite{MovGr}. The remaining terms in (\ref{E:commutator}) are sections of $H^{1,0}$. This proves integrability.

 The involution $\rho$ from the introduction  leaves  equations  that define $\OGr(2,11)$ (\ref{E:Pluckercoord}) invariant. 
A point $W=\overline{W}\in \OGr(2,11)$ is a complexification of the light-like real plane. A set of such planes is empty  in Lorentz signature. We conclude that $\rho$ is  fixed point free on $\cP$. 
The involution turns $\bar{a}^{ij}$ into $a^{ij}$ in (\ref{E:distgengen}) and swaps $H^{1,0}$ with $H^{0,1}$.

\section{The complexes $\Omega_{H^{0,1}}(\cP)$ and $\Grv_{\M,F}$}\label{S:structure}
 In this section we define the complexes that appeared  in the introduction.  
  
Construction of the tangential Cauchy-Riemann complex is based on the observation  that the CR structure is integrable  if and only if  the ideal 
\begin{equation}\label{E:defideal}
I=\{\omega\in \Omega\otimes \mathbb{C}|\forall\xi_{i}\in H^{0,1}\quad  \omega(\xi_{1},\dots,\xi_{\deg \omega})=0\}=\bigoplus_{p\geq 0} I^{p}\subset \Omega
\end{equation}
in the algebra differential forms  $\Omega=\Omega(Y)$ is $d$-closed: $d(I)\subset I$(see e.g.  \cite{Boggess} for the proof of the even case).
A CR-form   $\omega \in$ 
\begin{equation}\label{E:CRdef}
\Omega^{0,p}_{H^{0,1}}\overset{\ddef}{=}\Omega^{p}/I^p
\end{equation} is
$\sum_{i_1\dots,i_p}\omega_{i_i\dots,i_p}\bar{\nu}^{i_1}\wedge \cdots \wedge \bar{\nu}^{i_p}$
where $\bar{\nu}^{i}$ are complex-linear functionals  on $H^{0,1}$. The tangential Cauchy-Riemann operator $\dbar=\dbar_{H^{0,1}}$ in $\Omega_{H^{0,1}}=\Omega_{H^{0,1}}(Y)=\bigoplus_{p\geq 0}\Omega^{0,p}_{H^{0,1}}(Y)$ is induced by the DeRham differential $d$.
The map of complexes
\begin{equation}\label{E:res}
res_{H^{0,1}}:\Omega\rightarrow \Omega_{H^{0,1}}
\end{equation}
is  a restriction     onto $H^{0,1}$.
In our applications we are mainly  interested in $\Omega_{H^{0,1}}(\cP)$.

Another  complex announced in the introduction is  $\Grv_{\M,F}$.  
 It is a generalization of $\Grv$ (\ref{E:gr}).  
In order to define $\Grv_{\M,F}$, we choose linearly independent  even forms $E^{i}$ that vanish on $F$. These forms  are a  part of the vielbein and  generate a locally  free subsheaf in $\Omega^1(\M)$ of rank $(11|0)$. Let $x^{i},\theta^{\alpha}$ be local coordinates on $\M$. Without a loss of generality  $E^i$ is equal to $dx^i-T^i_{\alpha\beta}(x,\theta)\theta^{\alpha}d\theta^{\beta}$. The forms $E^i$  characterize the distribution (\ref{E:defdist}).
Equality
\begin{equation}\label{E:deGamma}
d(E^i)=\Gamma_{\alpha\beta}^{i}d\theta^{\alpha}d\theta^{\beta}+E^kG_k^i,
\end{equation} where $G_k^i$ are  some one-forms, is equivalent to (\ref{E:torsion},\ref{E:SG}). This implies that forms 
\begin{equation}\label{E:Jgenerators}
E^i, \Gamma_{\alpha\beta}^{i}d\theta^{\alpha}d\theta^{\beta}
\end{equation} generate a differential ideal $J\subset \Omega(\M)$. We define $\Grv_{\M,F}$ to be $\Omega(\M)/J$.
Together  with $x^{i},\theta^{\alpha}$  variables $d\theta^{\alpha}=\lambda^{\alpha}$ are   local generators of $\Grv_{\M,F}$ - the deformed version of the algebra (\ref{E:gr}). The algebra $\Grv_{\M,F}$ is  graded by $\deg_{\lambda}$. When we say that $\Grv_{\M,F}$ is a deformation of $\Theta$ we mean that  locally only the differential $D=D_{\M,F}$ in $\Grv_{\M,F}$ gets deformed: 
\begin{equation}\label{E:differentialdeformed}
 Dx^i=T^i_{\alpha\beta}\theta^{\alpha}\lambda^{\beta}.
\end{equation}

There is an analogue of the map (\ref{E:res}) for $\Grv_{\M,F}$:
\[res_{F}:\Omega(\M)\rightarrow \Omega(\M)/J=\Grv_{\M,F}\]

Construction of the map $p_{H_{0,1}}^*$ (\ref{E:ldefm}) requires a clarification. In order to define $p_{H_{0,1}}^*(\omega)$, we pick   $\tilde\omega\in \Omega(\M)$ such that $ res_{F}\tilde\omega=\omega$. We define  $p_{H_{0,1}}^*(\omega)$ to be $res_{H^{0,1}}p^*\tilde\omega$. This is not the end of the story. We need to verify that $p^*J\subset I$. We chek this on the generators (\ref{E:Jgenerators}).    It follows immediately from the  definition of $H^{0,1}$ (\ref{E:twmap})  that $p^*E^i$ vanishes on $H^{0,1}$. As a result $p^*E^i\in I$. The identity  \[\xi_{\delta_1}\righthalfcup (\xi_{\delta_2}\righthalfcup p^*\Gamma_{\alpha\beta}^{i}d\theta^{\alpha}d\theta^{\beta})=0\] for the  vector fields $\xi_{\delta_i}$  (\ref{E:distgengen}) is true  because the polynomials (\ref{correctnessofhom}) are zero.   We conclude that  $p^*\Gamma_{\alpha\beta}^{i}d\theta^{\alpha}d\theta^{\beta}\in I$ and $p^*J\subset I$.
It implies that  (\ref{E:ldefm}) is well defined and $p_{H_{0,1}}^*$ is a map of complexes.

Our next goal is to write $p_{H_{0,1}}^*$ in local coordinates on $\M$ and $\cP$. 
Let $a^{ij}(W)$ be the  Pl\"{u}cker coordinates of $W\in \OGr(2,11)$ (see Appendix \ref{S:Plucker}).The family of vector spaces  
\begin{equation}\label{E:swdefinition}
\s_W=\{{a}^{ij}(W)\Gamma_{\beta ij}^{\alpha }\eta_{\alpha}| \eta_{\alpha}\in \s_{11}\}\subset \s_{11}\overset{\ddef}{=}\s_{10,1\ \R}\otimes \C
\end{equation} 
defines a complex vector bundle
\begin{equation}\label{E:bundledef}
\s_{\OGr(2,11)}=\{(W,\xi)|W\in \OGr(2,11), \xi \in {\s_W}\}
\end{equation}

We are going to define coordinates on the total space of $\s_{\OGr(2,11)}$ that  will be  used in  the local description of $p_{H_{0,1}}^*$. For this purpose, we need a  basis in the space of local sections of $\s_{\OGr(2,11)}$.
Such a basis 
can be seen rather explicitly. We fix a basis $\{\eta_{\alpha}\}$ in $\s_{11}$ that is compatible with the decomposition (\ref{E:grading}), such that $\eta_{1},\dots,\eta_{8}\in s^1 $, $\eta_{9},\dots,\eta_{24}\in s^0 $, and  $\eta_{25},\dots,\eta_{32}\in s^{-1}$. We pick a plane  $W\in \OGr(2,11)$ the same as  in the proof of the isomorphism \ref{E:imageW}
and  choose it to be close to $U$ in   
  (\ref{E:decomposition1}). 
  
   We pick $kl$ such that  $a^{kl}(U)\neq 0$. We set  \[\mu_{\beta}=\frac{{a}^{ij}(W)}{a^{kl}(W)}\Gamma_{\beta ij}^{\alpha }\eta_{\alpha}, \beta=25,\dots,32\]
\begin{equation}\label{E:Unondegeneracy}
\mbox{Note that when $W=U$ then $\mu_{24+\alpha}$ is proportional  to $\eta_{24+\alpha}$. }
\end{equation}
 This means that $\{\mu_{\beta}\}$ are linearly independent sections of $\s_{\OGr(2,11)}$ in a Zariski neighborhood of $U$. Let $\mu^{\rA},\rA=1,\dots,8$ be sections of the dual bundle such that 
 \begin{equation}\label{E:coord}
 \mu^{\rA}(\mu_{24+\rB})=\delta^{\rA}_{\rB}.
 \end{equation}
 A variable $\lambda^{\alpha}$ defines a linear function on fibers of $\s_{\OGr(2,11)}$ because the fibers are subspaces in $\s_{11}$. 
 It follows immediately that 
 \begin{equation}\label{E:lambdaformula}
 \lambda^{\alpha}=\sum_{\rA=1}^8\frac{{a}^{ij}(W)}{a^{kl}(W)}\Gamma_{24+\rA ij}^{\alpha }\mu^{\rA}
 \end{equation}

The locally defined CR-forms on $\cP_{\M}$ are  functions in 
\begin{equation}\label{E:variables}
x^{i},\theta^{\alpha}, a^{ij},\bar{a}^{ij}, \mu^{\rA},d\bar{a}^{ij}
\end{equation} 
that have the total $\GL(2,\C)$-scaling degree zero(\ref{E:scalingdegree}) in  $a^{ij}$, $\bar{a}^{ij}$ and $\bar{a}^{ij}$. 
The map  (\ref{E:ldefm}) keeps  $x^i,\theta^{\alpha}$ unchanged and  replaces  $\lambda^{\alpha}$ with the  RHS of  the formula (\ref{E:lambdaformula}).

We want to finish this section with a question. In general the Poincar\'e lemma fails in a tangential CR complex (see e.g.\cite{Boggess}). Does it fail in $\Omega_{H^{0,1}}(\cP_{\M,F})$?

\section{Reality conditions}\label{S:reality}
The classical 11-D supergravity    is defined over the field of real numbers, whereas we work over the  complex numbers. 
The  missing reality conditions will be formulated in this section.
  
A real analytic  function  $f(z)=\sum_{k=0}^{\infty}c_iz^i,c_i\in \R$ is  characterized  by the  identity $f(z)=\overline{f(\bar{z})}$. More generally, a real analytic   function  $f$ on a complex manifold $X$ equipped with  an  anti-holomorphic involution $\rho$ is characterized by
 \[f(z)=\overline{f(\rho(z}).\] This definition of reality 
 extends to the space of complex smooth  differential forms $\Omega^{k}=\bigoplus_{i+j=k}\Omega^{i,j}=\bigoplus_{i+j=k}\Omega^{i,j}(X)$.
The involution $\rho$ maps  $\omega\in \Omega^{i,j}$
to $\rho^*\omega\in \Omega^{j,i}$. Bear in mind   that  $\overline{\rho^*\omega}\in \Omega^{i,j}$ and $\overline{\rho^{*}\dbar}=\dbar$. A real form
 satisfies
\[\omega=\overline{\rho^*\omega}\] Real forms define a sub-complex in $\bigoplus_{j} \Omega^{i,j}$.
The definition extends to super CR manifolds: a map $\rho$ is a $C^{\infty}$ CR involution   if $\rho^* H^{1,0}\subset \overline {H^{1,0}}$ and $\rho^2=\id$. The role of $(\Omega^{0,p},\dbar)$ is played by  the tangential CR complex, in which $\rho$ defines an anti-linear automorphism  of $\Omega^{0,p}_{H^{0,1}}$.

\section{A cohomological invariant of the CR structure on $\cP$}\label{invariantCRstr}
In this section we develop rudiments of the structure theory of super CR manifolds adapted to the odd twistor space $\cP$. The structure theory of  holomorphic  supermanifolds was studied in \cite{Calib}.   A holomorphic $(n|m)$-dimensional supermanifold  $Y^{}$ has the following basic  invariants: the underlying even $n$-dimensional manifold $Y_{\red}$ and a holomorphic rank $m$  vector bundle $\G$. A more refined invariant is a sequence of characteristic classes $\omega_i$, with the simplest  $\omega_1\in H^{1}(Y_{\red}, \Lambda^2\G^*\otimes T(Y_{\red}))$. Keep in mind that these characteristic classes have no immediate relation  to the topological characteristic  classes of vector bundles. The manifold $Y$ can be thought of as a deformation of the split manifold $Y_{split}=\Pi\G$,  the deformation    is trivial on $Y_{\red}$. The characteristic class $\omega_1$  in $H^{1}(Y_{split}, T(Y_{split}))$ (we interpret sections of $\Lambda^2\G^*\otimes T(Y_{\red})$ as local vector fields on $\Pi\G$) is zero when $Y\cong Y_{split}$. A non zero  $\omega_1$ is an obstruction to splitting of $Y$. The  \v{C}ech approach to cohomology was used in   \cite{Calib} for  the construction of $\omega_1$. Dolbeault cohomology has the same basic functionality, but it is more flexible because it admits a generalization to the CR case. We shall not attempt to  develop a theory of characteristic classes of super CR manifolds in the full generality.  Instead, the  goal of this section  is to  identify the cocycle $\omega_1$ and the group it belongs to in the case of $\cP_{\M,F}$.

In our definition of $\omega_1(\cP)$,  we certainly want to follow  the structure theory of holomorphic supermanifolds outline above.    Obviously, $\cP_{\red}$ is a relative Isotropic Grassmannian  $\cP_{\red}\cong \OGr(2,11)_{M}$  with the projection $\OGr(2,11)_{M}\overset{p}\rightarrow M$. 
The split form $\cP_{split}\cong \Pi p^*S_M$ has  a CR-structure that is nontrivial only on the fibers
of the projection $q_{split}:\cP_{split}\rightarrow M$.
We denote a fiber by $\wOGr(2,11)$.
Then
\begin{equation}\label{E:wOGr}
\wOGr(2,11)\cong\OGr(2,11)\times \Pi \s_{10,1\ \R}
\end{equation}
The subbundle $H^{1,0}\subset T^{\C}(\wOGr(2,11))$ is still defined by formulas (\ref{E:twmap},\ref{E:distgengen}) where   $\{\eta_{\alpha}\}\subset \Pi \s_{11}$ is a basis in the space of the constant spinors. 
We shall define now a collection of cocycles $\{\gamma^i\}$ that are  tangential CR forms over $\wOGr(2,11)_b\cong q_{split}^{-1}(b)$.

The construction of $\{\gamma^i\}$ simplifies  if we present $\dbar_{H^{0,1}}$ as a sum of two anti-commuting differentials $d_{I}$ and $d_{II}$. The differential $\dbar_{H^{0,1}}$ has the bi-degree $(1,1)$ 
with respect to the bigrading $(c,c')$ on  $\Omega_{H^{0,1}}(\cP)$  defined by the rule \[(c,c')=(\deg_{d \bar{a}}f, \deg_{\mu}f),f\in \Omega_{H^{0,1}}(\cP).\]   The  $(1,0)$ component of $\dbar_{H^{0,1}}$ is  $d_{I}=d\bar{a}^{ij}\frac{\sd}{\sd \bar a^{ij}}.$ The $(0,1)$ component is  
\begin{equation}\label{E:dii}
d_{II}=\mu^{A}\left(h(x,\theta,a)^{\alpha}_A\frac{\sd}{\sd \theta^{\alpha}}+g(x,\theta,a)^{i}_{A}\frac{\sd}{\sd x^{i}}\right)
\end{equation}

We need to describe local sections of  $\Omega_{H^{0,1}}(\wOGr(2,11)_b)$ in a more down-to-earth terms.
 If we set $x^i$ to constants $b^i$ ($b=(b^i)$), then the remaining variables (\ref{E:variables}) by definition are (possibly singular) sections of  $\Omega_{H^{0,1}}(\wOGr(2,11)_b)$ (\ref{E:wOGr}). The space of $C^{\infty}$ sections of $\Omega_{H^{0,1}}(\wOGr(2,11)_b)$  is isomorphic to the space of sections of 
\[ \bigoplus_{p,i,j\geq 0}\Omega^{0,p}\Lambda^{i} \s^*_{11}\otimes \Sym^{j}\s^*_{\OGr(2,11)}.\]
Bear in mind that a local section of $\Sym^{j}\s^*_{\OGr(2,11)}$ is a  local holomorphic function on the total space of $\s_{\OGr(2,11)}$ of degree $i$ homogeneity in the fiber-vise direction (see Section \ref{S:structure} for details).
  The differential $d_{I}$ acts on the elements of the algebra generated by $\theta^{\alpha}, a^{ij},\bar{a}^{ij}, \mu^{\rA},d\bar{a}^{ij}$. In general $d_{II}f(\theta^{\alpha}, a^{ij},\bar{a}^{ij}, \mu^{\rA},d\bar{a}^{ij})$ is $x$-dependent,  but if we remove all terms in $d_{II}$ (\ref{E:dii}) of degree one and higher in $\theta$, the remaining differentiation $d'_{II}$ transforms $\Omega_{H^{0,1}}(\wOGr(2,11))$ to itself and squares to zero. By definition $\dbar_{H^{0,1}, \wOGr(2,11)}=d_I+d^{'}_{II}$.
 
We are ready to describe the cocycles $\gamma^i$ in local coordinates.
 In the flat case    $\gamma^i=\mu^{\rA}g(\theta,b)^{i}_{A}=\dbar_{H^{0,1}}x^i=\Gamma^{i}_{\alpha\beta}\theta^{\alpha}\lambda^{\beta},i=1,\dots,11$, with $\lambda^{\beta}$ replaced by (\ref{E:lambdaformula}). Elements $\gamma^i(b)$ are sections of $\Omega_{H^{0,1}}(\wOGr(2,11)_{b})$. They  can be packaged into a single object $\omega_1(\cP)=\gamma^i(x)\frac{\sd}{\sd x^i}\in \Omega_{H^{0,1}}(\cP_{split},T_{CR}(\cP_{split}))$ in which  $x$-dependence is restored. 
 Here $T_{CR}(\cP_{split})$ is $T^{\C}(\cP_{split})/H^{0,1}$. The first cohomology  group of $\Omega_{H^{0,1}}(\cP_{split},T_{CR}(\cP_{split}))$ is an analogue of $H^{1}(Y_{split}, T(Y_{split}))$ in the holomorphic theory. 
  In the non flat case   
 \begin{equation}\label{E:defgamma}
 \mbox{$\gamma^i$  is the leading term of   $\dbar_{H^{0,1}}x^i$   in $\theta$ of degree $\deg_{\theta}=1$.}
 \end{equation} 
  Equation 
  \begin{equation}\label{E:gammacocyleeq}
  (d_I+d'_{II})\gamma^i=0
  \end{equation}
   follows from $\dbar_{H^{0,1}}^2=0$. Some simple properties of $\gamma^i$ are  established in \cite{MovGr}.
 In particular \cite{MovGr} contains a computation of the cohomology of $\Omega_{H^{0,1}}(\wOGr(2,11))$.
  Elements $\gamma^i$ generate $H^1(\Omega_{H^{0,1}}(\wOGr(2,11)))\cong V^{11}$. 
They transform covariantly as vectors under the action of $\Spin(10,1,\R)$. 
   
   Note that if $\dbar_{H^{0,1}}x^i$  were all zero, the manifold $\cP$ would be split. The manifold  $\cP$ would  still  be  split were the elements   $\dbar_{H^{0,1}}x^i$ just cohomologous to zero in an imprecise sense, which takes into account the local coordinate change. This is why $\omega_1$ is  an obstruction to splitting of $\cP$.

 The proof of  Proposition \ref{P:quasiisomorphism} is  simple, provided we take for granted the following result.
 \begin{proposition}\label{E:identification}(cf.\cite{MovGr})
\[\H^0(\OGr(2,11),\Sym^i\s^*_{\OGr(2,11)}))=A_i,\] \[\H^k(\OGr(2,11),\Sym^i\s^*_{\OGr(2,11)}))=0,k\geq 1.\]
\end{proposition}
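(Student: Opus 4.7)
The plan is to realize $\Sym^i \s^*_{\OGr(2,11)}$ as homogeneous vector bundles on the rational homogeneous space $\OGr(2,11) \cong \Spin(11)/P$ (with Levi $\GL(2)\times\Spin(7)$) and invoke a Kempf-type ``collapsing'' argument relating the pure spinor algebra $A$ to the fiberwise coordinate ring of this bundle.

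First I would set up the comparison morphism. The fiberwise inclusion $\s_W \hookrightarrow \s_{11}$ assembles into a $\Spin(11)$-equivariant inclusion of bundles $\s_{\OGr(2,11)} \hookrightarrow \s_{11}\otimes\O$; dualizing and taking symmetric powers yields a surjection $\Sym^i \s_{11}^*\otimes\O \twoheadrightarrow \Sym^i \s^*_{\OGr(2,11)}$, hence a linear map $\Sym^i \s_{11}^* \to H^0(\OGr(2,11), \Sym^i \s^*_{\OGr(2,11)})$. One checks that the quadrics $v^i = \Gamma^i_{\alpha\beta}\lambda^\alpha\lambda^\beta$ are sent to zero in $\Sym^2 \s^*_{\OGr(2,11)}$: evaluated on fiber vectors they reproduce the polynomial $P^s(\bar a)$ of~(\ref{correctnessofhom}), which vanishes by the very same $\SO(11)$-representation argument already used for integrability in Section~\ref{E:P} --- the vector representation of $\SO(11)$ does not occur as a submodule of $\Sym^2\Lambda^2 V^{11}$. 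This delivers the desired homomorphism $A_i \to H^0(\OGr(2,11), \Sym^i \s^*_{\OGr(2,11)})$.

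Next, I would consider the total space $E := \mathrm{Spec}_{\OGr(2,11)}\bigl(\Sym^\bullet \s^*_{\OGr(2,11)}\bigr)$, equipped with the affine structure morphism $\pi_1\colon E \to \OGr(2,11)$ and the ``collapsing map'' $\pi_2\colon E \to \s_{11}$ induced by the bundle inclusion above. Any $\mu \in \s_W$ satisfies the pure spinor relations, since $\Gamma^{ij}_{\alpha\beta}\lambda^\alpha(\mu)\lambda^\beta(\mu)$ is proportional to the decomposable isotropic bivector $a^{ij}(W)$, so $\pi_2$ lands in $\mathrm{Spec}\,A$. Conversely, a nonzero $\lambda \in \mathrm{Spec}\,A$ determines a unique isotropic two-plane $W(\lambda)$ via the formula for its Pl\"ucker coordinates, so $\pi_2$ is a proper birational morphism contracting the zero section $\OGr(2,11)\subset E$ to the cone point. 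This is precisely a Kempf collapsing of the $\Spin(11)$-orbit closure $\mathrm{Spec}\,A \subset \s_{11}$.

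The proposition will then follow once one shows that $\pi_2$ is a rational resolution, i.e.\ $(\pi_2)_*\O_E = \O_{\mathrm{Spec}\,A}$ and $R^k(\pi_2)_*\O_E = 0$ for $k \geq 1$. Granting that, affineness of $\mathrm{Spec}\,A$ gives $H^k(E,\O_E) = A\cdot\delta_{k,0}$; the Leray spectral sequence for the affine projection $\pi_1$, combined with the fiberwise-degree decomposition $(\pi_1)_*\O_E = \bigoplus_i \Sym^i \s^*_{\OGr(2,11)}$, yields $\bigoplus_i H^k(\OGr(2,11),\Sym^i \s^*_{\OGr(2,11)}) = A\,\delta_{k,0}$, which is exactly the claim (and simultaneously forces the comparison map from step one to be an isomorphism). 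The main obstacle is therefore the acyclicity of the collapsing. I would expect it to follow from Kempf's vanishing theorem for $G$-equivariant bundles on $G/P$ applied to the $P$-submodule $s^{-1}\subset \s_{11}$; as an alternative, one may decompose $\Sym^i (s^{-1})^*$ into irreducible $(\GL(2)\times\Spin(7))$-modules and verify by Borel-Weil-Bott that every Weyl-shifted weight lies in the closed dominant chamber, so that only $H^0$ contributes.
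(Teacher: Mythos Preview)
The paper does not supply its own proof of this proposition: the statement is tagged ``(cf.~\cite{MovGr})'' and is simply quoted from that companion paper, whose very title (``Geometry of a desingularization of eleven-dimensional gravitational spinors'') suggests that the argument there is exactly the Kempf-type collapsing you outline. So there is nothing in the present paper to compare your proposal against; your sketch is in fact a plausible reconstruction of what \cite{MovGr} does.

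That said, two points in your sketch deserve tightening. First, birationality of $\pi_2$ is not quite automatic: you need that a \emph{generic} eleven-dimensional pure spinor $\lambda$ determines a unique $W\in\OGr(2,11)$ via $a^{ij}(W)\propto\Gamma^{ij}_{\alpha\beta}\lambda^\alpha\lambda^\beta$, and that this bivector is indeed decomposable and isotropic for such $\lambda$ (equivalently, that the open orbit of $\Spin(11)$ on the pure-spinor cone has one-point fibers under $\pi_2$). This is true but worth stating. Second, and more seriously, the step you flag as ``the main obstacle'' really is the whole proof: the equality $(\pi_2)_*\O_E=\O_{\mathrm{Spec}\,A}$ requires knowing that $\mathrm{Spec}\,A$ is normal (Kempf collapsing yields the normalization, not $A$ itself a priori), and the higher-direct-image vanishing $R^k(\pi_2)_*\O_E=0$ requires an honest Borel--Weil--Bott computation for the $P$-module $s^{-1}\cong\s_7\otimes f^-$ and its symmetric powers. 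Neither follows formally from Kempf's general theorem without checking the relevant dominance conditions on weights; ``I would expect'' is not a proof. If you want a self-contained argument you must carry out that weight analysis explicitly, or else cite \cite{MovGr} as the paper does.
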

 The  computation of  $\H^i( \Omega_{H^{0,1}})$  can be done in two stages. The first is the computation of  the $d_I$-cohomology. The resulting algebra $\E$  has less generators then $\Omega_{H^{0,1}}$. The second is the computation of the $d_{II}$-cohomology  of $\E$. Notice that $d_{I}$ does not depend on $x$ and $\theta$.
Elements of the algebra generated by $\mu, a, \bar{a},  d\bar{a}$ are possibly singular $C^{\infty}$ Dolbeault  forms with values in $\bigoplus_{n\geq 0}\Sym^{n}\s^*_{\OGr(2,11)}$. The $d_{I}$-cohomology of the subalgebra  of regular $C^{\infty}$ forms   is  the algebra $A$ (\ref{E:Adef}) (see  Proposition \ref{E:identification}). This explains why the stage two of the procedure  is identical  to the $D$-cohomology computation in $\Grv_{\M,F}$. The reality condition enforced by $\rho$ picks up polynomials in the generators of $A$ with real coefficients.
 
 The  CR structure on $\cP$ is compatible with  the trivial CR structure on the relative  grassmannian $\OGr(2,11)_{M}=\cP_{\red}$. Let $J$ be the kernel of the restriction map $\Omega_{H^{0,1}}(\cP)\rightarrow \Omega_{H^{0,1}}(\cP_{\red})$. The arguments from the previous paragraph become global  in the framework of the spectral sequence associated with the filtration $J^{\times n}$ in $\Omega_{H^{0,1}}(\cP)$. The spectral sequence degenerates in the $E^2$ term  as a consequence of Proposition \ref{E:identification}.
 
\section{The inverse transform}\label{S:inverse}
The odd twistor transform converts a SUGRA datum $(\M,F)$  that satisfies (\ref{E:torsion}, \ref{E:SG}) into a  $(2\times15+11|2\times8+16)$-dimensional CR manifold $\cP=\cP_{\M,F}$. In this section we shall concern ourself with the intrinsic  characterization of  $\cP$ in the class $(2\times15+11|2\times8+16)$-dimensional CR manifolds.
 To summarize our previous discussion, we  list the most important characteristics  of $\cP$:
\begin{enumerate}
\item \label{E:data1} The complexified  tangent bundle $T^{\mathbb{C}}({\cP})$ contains a complex rank $(15|8)$  subbundle $H^{1,0}$ that defines an integrable CR-structure.
\item \label{E:data2} There is a non-empty family $\OGr(2,11)\subset \cP$ of CR-holomorphic Orthogonal Grassmannians. The real normal bundle   is trivial $N_{\OGr(2,11)}\cong \OGr(2,11)\times V^{10,1\ \R}\times \Pi \s^{\mathbb{R}}_{10,1}$. 
The bundle $H^{1,0}|_{\OGr(2,11)}$ is a (trivial) extension 
\begin{equation}\label{E:exact}
0\rightarrow T(\OGr(2,11))\rightarrow H^{1,0}\rightarrow \Pi \bar{\s}_{\OGr(2,11)}\rightarrow 0
\end{equation}
\item \label{E:data2.5} 
The preimage of $\OGr(2,11)\subset \cP_{\red}$ in $\cP_{split}$ as a CR manifold 
is isomorphic to $\wOGr(2,11)$ (\ref{E:wOGr}).

  Let  $x^i, i=1,\dots,11$ be local even  independent functions  on $\M$. By abuse of notations, we denote $p^*x^i$ by $x^i$.
  In the notations of Section \ref{S:structure}, the differential $\dbar_{H^{0,1}}x^i\in \Omega^1_{H^{0,1}}(\cP)$ can locally  be written as 
  \begin{equation}\label{E:gdef111}
  g^i_{A}(x,\theta,a)\mu^{A}.
  \end{equation} Keep in mind that it is automatically independent of  $\bar a$ and  $d\bar a$.   A section $ \gamma^i_{\alpha,A}(b,a)\theta^{\alpha}\mu^{A}$  of $\Omega^1_{H^{0,1}}(\wOGr(2,11)_b)$ is  the leading in $\theta$ term of (\ref{E:gdef111}).
  The classes of $\gamma^i$ define a basis in $\H^1(\Omega_{H^{0,1}}(\wOGr(2,11)))$ (cf. discussion in Section \ref{invariantCRstr}).

\item \label{E:data3} There is a fixed point free involution $\rho:\cP\rightarrow \cP$ that maps $H^{1,0}$ to $H^{0,1}$. At least one  of $\OGr(2,11)$ in the family  is $\rho$-invariant. The involution commutes with a holomorphic $\SO(10,1,\R)$ action  on the $\OGr(2,11)$.
\end{enumerate}\label{D:embeddable}
\begin{definition}A supermanifold $\cP$ is {\it globally embeddable} if  it is a closed CR submanifold of  a complex $(26|24)$-dimensional manifold $\cP_{\mathbb{C}}$. We assume that $\rho$ extends to $\cP_{\mathbb{C}}$ as fixed-point free antiholomorphic involution.
\end{definition}

Global embeddability  of real-analytic CR structures on ordinary manifolds was established by Andreotti and Fredricks \cite{AndreottiandFredricks}. Presumably, their technique  admits a super-extension that can be applied to a real-analytic $\cP$. Meanwhile we just simply assume that $\cP$ is globally embeddable.

We shall describe  how to  construct a super space-time $\M$ from $\cP\subset \cP_{\mathbb{C}}$ that satisfies conditions (\ref{E:data1},\ref{E:data2},\ref{E:data2.5},\ref{E:data3}). 
The idea  goes back to Penrose. We identify $\M$ with the $\rho$-real points in the moduli space $\M^{\C}$ of $\OGr(2,11)\subset  \cP_{\mathbb{C}}$. 

The  existence theorem for the versal family of compact super subvarieties \cite{Vaintrob} relies on vanishing of the cohomology groups associated with the normal bundle of the subvariety.  We begin with a computation of these  groups.
It follows from  (\ref{E:exact}) and Assumption (\ref{E:data2}) that the holomorphic normal bundle $N_{\OGr(2,11)}$ is the quotient of $\OGr(2,11)\times V^{11}\times \Pi \s_{11}$ by  $\Pi {\s}_{\OGr(2,11)}$ (\ref{E:bundledef}).
 The formal tangent space to the moduli of deformations $\OGr(2,11)\subset \cP_{\mathbb{C}}$  is isomorphic to $\H^0(N_{\OGr(2,11)})=\H^0(\OGr(2,11),N_{\OGr(2,11)})$.  The space of obstructions is $\H^1(N_{\OGr(2,11)})$ (cf.\cite{KodairaSubm}). 
 \begin{proposition}\label{P:trivial}
Let $Y$ be a compact projective homogeneous space of a complex semisimple group $G$, with the Lie algebra $\g$.  The nontrivial cohomology of the structure sheaf and the tangent sheaf  are $\H^0(Y,\O)=\mathbb{C}$, $\H^0(Y,T(Y))=\g$ respectively.
\end{proposition}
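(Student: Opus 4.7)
The first statement, $\H^0(Y,\O)=\mathbb{C}$, is immediate: $Y$ is projective hence compact, and connected as a continuous image of the connected group $G$, so by the maximum principle every global holomorphic function is constant.

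For the tangent sheaf I would write $Y=G/P$ with $P$ a parabolic subgroup and use the identification $T(Y)\cong G\times_P(\g/\p)$. Differentiating the $G$-action on $Y$ yields a $G$-equivariant Lie algebra map $\varphi\colon\g\to \H^0(Y,T(Y))$, and the task is to show $\varphi$ is an isomorphism. Injectivity is straightforward: $\ker\varphi$ is a $G$-ideal acting trivially on $Y$, hence contained in $\bigcap_{g\in G}g\p g^{-1}$; since $\g$ is semisimple, this intersection is zero after absorbing any simple factor that happens to lie entirely inside $P$ into a redefinition of the acting group.

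For surjectivity I would apply the long exact cohomology sequence to the $G$-equivariant short exact sequence of bundles induced by the $P$-module sequence $0\to\p\to\g\to\g/\p\to 0$. Since $\g$ is already a $G$-module, $G\times_P\g$ is the trivial bundle $Y\times\g$, whose cohomology is $\g$ in degree zero and vanishes in positive degrees. The long exact sequence then collapses to
\[
0\to \H^0(Y,G\times_P\p)\to\g\xrightarrow{\varphi}\H^0(Y,T(Y))\to \H^1(Y,G\times_P\p)\to 0,
\]
so it suffices to show that $\H^0(Y,G\times_P\p)=\H^1(Y,G\times_P\p)=0$.

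The main obstacle is this last vanishing, and I would handle it by Bott's theorem on homogeneous vector bundles. Filter $\p$ by $P$-stable subspaces whose successive quotients are one-dimensional weight spaces under a maximal torus, and run the resulting spectral sequence: each $E_1$ term is the cohomology of a line bundle $G\times_P\mathbb{C}_\lambda$ with $\lambda$ a weight of $\p$, and Borel--Weil--Bott pins this down in terms of regularity of $\lambda+\rho$. The individual line-bundle contributions to total degrees $0$ and $1$ are cancelled by the differentials that record the non-split nature of the filtration of $\p$ under adjoint action of $P$ (as is already visible in the $SL_2/B$ toy model, where the extension class of $0\to\fu^+\to\p\to\h\to 0$ kills both $\H^0$ and $\H^1$). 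Reduction from semisimple to simple $G$ is immediate by factoring $Y=\prod Y_i$ along the simple factors of $G$, since both $T(Y)$ and the statement of the proposition are multiplicative under products.
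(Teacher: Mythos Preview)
The paper's proof is a one-line citation to Theorem~VII of Bott's 1957 paper; there is no further argument. Your approach is more explicit and the exact-sequence framework is sound, but there are gaps.

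First, the proposition asserts (and the paper immediately uses) vanishing of \emph{all} higher cohomology of $\O$ and $T(Y)$: the very next sentence in the paper extracts $\H^1(\OGr(2,11),T)=0$. You address only $\H^0(Y,T(Y))$. From your exact sequence, $\H^1(Y,T)=0$ would require $\H^2(Y,G\times_P\p)=0$, which you never claim; and you assert $\H^i(Y,\O)=0$ for $i>0$ (when you say the trivial bundle $Y\times\g$ has cohomology only in degree zero) without justification---this too is part of what is to be proved.

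Second, your filtration of $\p$ by $P$-stable subspaces with one-dimensional quotients does not exist when $P$ is not a Borel (as it is not for $\OGr(2,11)$): the Levi quotient $\p/\fu_P\cong\fl$ carries the adjoint action of a non-abelian reductive group and has no $P$-composition series by characters, so $G\times_P\C_\lambda$ is not available. The repair is standard---pull back along $G/B\to G/P$, or filter by $P$-irreducibles and use Bott's theorem in that form---but it should be said. More importantly, the assertion that the spectral-sequence differentials effect the required cancellations in degrees $0$ and $1$ is exactly the substance of Bott's theorem, not something one can infer from the $\SL_2/B$ example; at that step your argument collapses to the same citation the paper makes. (Incidentally, the cancellation can genuinely fail: for $Y=\P^{2n-1}$ regarded as a homogeneous space of $\mathrm{Sp}(2n)$ one has $\H^0(T)\cong\sl_{2n}\supsetneq\mathfrak{sp}_{2n}$, so the proposition tacitly assumes $G$ is the full connected automorphism group of $Y$, which does hold for $\OGr(2,11)$.)
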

\begin{proof}
The proof follows from Theorem VII in \cite{bott}.
\end{proof}

 This verifies that  $\H^1(\OGr(2,11),T(\OGr(2,11)))=\{0\}$ and that the space $\OGr(2,11)$ is rigid \cite{KodairaandSpencer}.

 The nonzero even component of the cohomology  is  \[\H^0({\OGr(2,11)},N)^{\mathrm{even}}=V^{11}\](Proposition \ref{P:trivial}).
 We extract the odd part  from the long exact sequence:
\[
\begin{split}
&0\rightarrow \H^0(\Pi {\s}_{\OGr(2,11)})\rightarrow \H^0(\O)\otimes \Pi \s_{11} \rightarrow \H^0(N_{\OGr(2,11)})^{odd}\rightarrow \H^1(\Pi {\s}_{\OGr(2,11)})\rightarrow \\
&\rightarrow \H^1(\O)\otimes \Pi \s_{11} \rightarrow \H^1(N_{\OGr(2,11)})^{odd}\rightarrow \H^2(\Pi {\s}_{\OGr(2,11)})\rightarrow \dots
\end{split}
\]
Vanishing  of $\H^i({\s}_{\OGr(2,11)})$ 
was verified in \cite{MovGr}.
We derive that 
\begin{equation}\label{E:decomp}
\H^0(N)\cong V^{11}+\Pi \s_{11},\quad \H^1(N)=\{0\}
\end{equation} 
  By the super version of the Kodaira theory of deformation of compact immersions \cite{Vaintrob}
    $\OGr(2,11)\subset \cP^{\mathbb{C}}$
can be included in a complex-analytic versal family $\M^{\mathbb{C}}$.   We define $\M$ to be the real locus of the involution $\rho$ in $\M^{\C}$. Note that the condition (\ref{E:data3}) on $\rho$  is rigid \cite{BVinbergALOnishchik}. In principle, $\M$ might have several connected components in $\M^{\C}$. We do not reject a possibility of a disconnected  $\M^{\C}$ either.

Our next task is to define the distribution $F$ (\ref{E:defdist}) that satisfies (\ref{E:SG}).
We construct it using the graph of the universal family $\Q\subset  \cP^{\mathbb{C}}\times \M^{\mathbb{C}}$ (cf. \cite{Calib}).
It fits into the diagram  \[\cP^{\mathbb{C}}\overset{r}{\leftarrow} \Q\overset{p}{\rightarrow} \M^{\mathbb{C}}\]
Leaves of $r$ 
are purely odd affine spaces. A fiber $r^{-1}(x)$ $x\in \OGr(2,11)\subset\cP^{\mathbb{C}}$ is modeled by a subspace of sections of $N_{\OGr(2,11)}$ that vanish at $x$. The $r$-vertical tangent subspaces in $T(\Q)$ under projection $p$ span odd subbundle $F\subset T(\M^{\mathbb{C}})$. By the construction $p(r^{-1}(x))$ is tangential to $F$. On the general grounds  $p^{-1}(\M)$ is a CR submanifold in $\Q$ and $r:p^{-1}(\M)\rightarrow \cP^{\C}$ is a CR map. The complexified real tangent bundle $T^{\C}(p^{-1}(\M))$ is isomorphic to the extension of 
$T(\Q)|_{p^{-1}(\M)}$ by
 the antiholomorphic relative tangent bundle 
 $\overline{T}_{\Q/\M^{\C}}|_{p^{-1}(\M)}$.
 We assume that 
\begin{equation}\label{E:isoPCR}
p^{-1}(\M)\overset{r}\rightarrow \cP\subset \cP^{\C}
\end{equation} is a global isomorphism. Near a real point $b\in \M$  represented by $\OGr(2,11)\subset \cP$ (Condition \ref{E:data2}) a local isomorphism  follows from the inverse function theorem.
  The subbundle $H^{0,1}\subset T^{\C}(p^{-1}(\M))$ is isomorphic to the extension of $T_{\Q/\cP^{\C}}|_{p^{-1}(\M)}$ by $\overline{T}_{\Q/\M^{\C}}|_{p^{-1}(\M)}$. The extension is isomorphic (by the assumption (\ref{E:data2.5})) to (\ref{E:exact}) and  is compatible with the isomorphism (\ref{E:isoPCR}).

It remains to verify the torsion equation (\ref{E:SG}).  Let us choose even independent local holomorphic functions $z^i, i=1,\dots,11$ that vanish at  $z\in \M \subset  \M^{\C}$. We set $x^i=\mathrm{Re} z^i$ 
 and choose a vector  $\xi\in T_w(\Q)$ $w\in p^{-1}(z)$  that  is  tangential to a fiber of  $r$. A vector   $(Dp)\xi\in T_z(\M^{\mathbb{C}})$ is the image   of $\xi$   under the differential $Dp$ of the map $p$. Then 
\[p^*\frac{\sd x^i}{\sd (Dp)\xi} = \frac{\sd p^*x^i}{\sd \xi} =\xi \righthalfcup dp^*x^i=\xi \righthalfcup \dbar p^*x^i =\xi \righthalfcup \dbar_{H^{0,1}}p^*x^i=\xi \righthalfcup g^i=p^*((Dp)\xi\righthalfcup p_*g^i).\]
The term $g^i$ is the same as in  \ref{E:dii}.
In short
\begin{equation}\label{E:formEdefinition}
\frac{\sd x^i}{\sd (Dp)\xi}-(Dp)\xi\righthalfcup p_*g^i=0.
\end{equation}
The vector $\xi$ is a value of a holomorphic vector field that is defined in a small neighborhood of $w$. It commutes with any local, antiholomorphic tangential to fibers of $p$ vector field $\bar\zeta$. This implies that $0=\frac{\sd}{ \sd \xi} \frac{\sd p^*x^i}{\sd \bar\zeta}=\frac{\sd}{\sd \bar\zeta} \frac{\sd p^*x^i}{\sd \xi}$.
From this we conclude that expression   $g^i=g^i_{A}(x,\theta,a)\mu^{A}$ (\ref{E:defgamma}) is a global holomorphic, $x$ and $\theta$-dependent section of $\s^*_{\OGr(2,11)}$ - the bundle dual to $\s_{\OGr(2,11)}$. With the help of Proposition \ref{E:identification} we convert $g^i$ to  $g^i_{A,\alpha}(x,\theta)\lambda^{\alpha}=g^i_{A,\alpha}(x,\theta)d\theta^{\alpha}$ (cf. discussion in Section \ref{S:structure}). Equation \ref{E:formEdefinition} is equivalent to $Dp(\xi)\righthalfcup {E}^i=0$ for the forms $ {E}^i=dx^i-g^i_{A,\alpha}(x,\theta)d\theta^{\alpha}.$ The span $<CX>$ of the image $CX$ of the map $(W,\xi)\rightarrow \xi\in\s_{11}, (W,\xi)\in \s_{\OGr(2,11)}$ (see equation (\ref{E:bundledef}) notations) coincides with $\s_{11}$. 
It implies that  $\eta\righthalfcup E^i=0$ for all $\eta\in F_z$.  The map $p$ is a submersion with purely even fibers, therefore, $\dim F_z=(0,32)$. We conclude that  the  independent forms $E^i, i=1,\dots,11$ define $F_z$.

By definition $\dbar^2_{H^{0,1}}p^*x^i= \dbar_{H^{0,1}}g^i=0$. 
This  implies that the restriction of two-from $dE^i$ on $CX\subset F_z$ is zero. It is known that $CX$ is the space of complex solutions of the equations $v^i=0$ (\ref{E:pure})(see  \cite{MovGr} for details). So $dE^i=c^i_j\Gamma^j_{\alpha\beta}d\theta^{\alpha}d\theta^{\beta}+G^i_{j}E^j$ where $G^i_{j}$ are some one-forms. The proof follows if we prove that  $c^i_j$ is invertible. Invertibility follows from  the conceptually simple homological considerations. We shall be sketchy and leave to the reader to fill in the missing details. First we show that the complex $\Omega_{H^{0,1}}(\wOGr(2,11))$ computes $\Tor^{\Sym [\s_{11}]}(A,\C)$. By Assumption (\ref{E:data2.5} )  classes $\gamma^i$ (leading $\deg_{\theta}=1$ terms of $g^i$ (\ref{E:defgamma})) define a basis in $\Tor^{\Sym [\s_{11}]}_1(A,\C)$. Then we compute the same group using the minimal free $\Sym [\s_{11}]$ resolution of $A$.  We interpret ${\gamma'}^i=c^i_j\Gamma^j_{\alpha\beta}d\theta^{\alpha}d\theta^{\beta}$ as cocycles in  the minimal resolution approach.  To show that the classes  ${\gamma'}^i$ and ${\gamma}^i$ coincide in $\Tor_1^{\Sym [\s_{11}]}(A,\C)$ and the matrix $c^i_j$ is invertible,  we use equivalence of the two approaches.

 The inverse odd twistor transform of $\cP$ provides us with  the SUGRA datum $(\M,F)$.  Note that all the steps of the inverse  transform  are reversible and the direct  transform $\cP_{\M,F}$ is identically equal to $\cP$.

In order to construct a versal family of $\OGr(2,11)$ in $\cP$  that does not have a global complex embedding,   more advanced analytic methods are needed.
\section{An even modification of the CR structure on $\cP$}\label{S:modCR}
There is an interesting  modification of the CR structure on $\OGr(2,11)_{\M}$ evoked by equation (\ref{quastion3}). The modified complex distribution ${H'}^{1,0}\subset T^{\C}(\OGr(2,11)_{\M})$ also fits into the diagram (\ref{E:twmap}). The map $Dp$ has the same kernel. Choose a splitting $T^{\C}_z(\M)=F^{\C}_z+N_z^{\C}$. The space $(Dp){H'}_x^{1,0}$    in addition to (\ref{E:distgengen}) contains a span of 
\begin{equation}
v^i=\bar{a}^{ij}f_{j}
\end{equation}
Vectors $\{f_j\}$  define an orthonormal basis in $N_z$. Note that ${H'}^{1,0}$ does not depends on the splitting.  The proof of integrability of the new CR structure goes through. We denote the resulting CR manifold by $\cR_{\M,F}$. The manifold $\cR$ has dimension $(2\times 17+7|2\times8+16)$. We plan to investigate relation of $\cR_{\M,F}$  to $\cP_{\M,F}$ and to study homological properties of 
$\Omega^{0,i}_{H^{0,1}}(\cR_{\M,F})$ in the following publications.

\bigskip

{\bf\Large  Appendix}
 
\appendix
\section{Useful decompositions of adjoint and spinor  representations}\label{S:adjoint}
The local coordinates $\mu^{\rA}$(\ref{E:coord}),  $a^{ij}$  on the total space of the vector bundle $\s_{\OGr(2,11)}$(\ref{E:bundledef}) depend on the choice of a base point $U\in \OGr(2,11)$. In this section we shall elaborate in this.  
The coordinates and the base point depend on the  direct sum decomposition 
 \begin{equation}\label{E:decomposition1}
 \vv^{11}\cong \vv^7+\vv^4=\vv^7+U+ U^{'}
\end{equation}
of the fundamental vector representation of the complex $\SO(11)$.
$\vv^i$  stands for an $i$-dimensional complex Euclidean space; $V^7\perp V^{4}$. The two-dimensional spaces   $U,U^{'}\subset \vv^4$ such that $U\cap U^{'}=\{0\}$ are  isotropic. The inner product defines a non-degenerate pairing between $U$ and $U^{'}$. The summands in (\ref{E:decomposition1}) are irreducible $\SO(7)\times \GL(2)$ representations.

 By utilizing  (\ref{E:decomposition1}) we immediately arrive at  the   decomposition of  $\Ad\SO(11)\cong\Lambda^2\vv^{11}$.
  it has a form  of a spectral decomposition by   the eigen subspaces of the central element $c\in \gl_2\cong U\otimes U^*\cong U\otimes U'$:
\begin{equation}\label{E:decomp2}
\begin{split}
&\Ad(\mathfrak{so}_{11})_2=\Lambda^2U'\\
&\Ad(\mathfrak{so}_{11})_1=\vv^7\otimes U'\\
&\Ad(\mathfrak{so}_{11})_0=\Lambda^2\vv^7+U\otimes U^{'}\\
&\Ad(\mathfrak{so}_{11})_{-1}=\vv^7\otimes  U\\
&\Ad(\mathfrak{so}_{11})_{-2}=\Lambda^2  U
\end{split}
\end{equation}
The one-dimensional linear space  $\Lambda^2  U$ is the  Pl\"ucker image in $\P(\s_{11})$  of  the already mentioned  base point $U\in \OGr(2,11)$.
The construction of the coordinates $\mu^{\rA}$ relied also on a direct sum decomposition of the spinor representation. We shall see now how this comes about.  The space of (Dirac) spinors $\s_{11}$
 is an irreducible module over the Clifford algebra $Cl(\vv^{11})$.  The spinor representation $\s_{11}$ is symplectic \cite{Deligne}. Let $C$ be the corresponding skew-symmetric $\Spin(11)$-invariant inner product with a matrix  $C_{\alpha\beta}$    in  the basis  $\{\eta_{\alpha}\}\subset \s_{11}$. The decomposition (\ref{E:decomposition1}) explains identifications
\[Cl(\vv^{11})\cong Cl(\vv^7)\otimes Cl(\vv^4)\]
\[\s_{11}\cong \s_{7}\otimes \s_4, \dim_{\C}\s_{11}=32, \dim_{\C}\s_{7}=8,\dim_{\C}\s_{4}=4 \]
The complex spinor representation $\s_4$ is a direct sum $W_l+W_r$ of irreducible two-dimensional representations of  $\Spin(4)\cong\SL(2)\times \SL(2)$.
We arrived at  a $\Spin(7)\times \Spin(4)$-isomorphism 
\begin{equation}\label{E:spinordec}
\s_{11}=\s_7\otimes W_l+\s_7\otimes W_r
\end{equation}

The spinor representation $\s_{11}$ can be further decomposed 
into eigenspaces of the central element $c\in \gl_2$.

\begin{equation}\label{E:grading}
\begin{split}
&s^{1}=\s_7\otimes f^+\\
& s^0=\s_7\otimes W_l\\
&s^{-1}=\s_7\otimes f^-
\end{split}
\end{equation}
where $f^{\pm}$ are  the $c$-eigenvectors in $W_r$. It is this decomposition that is used for constructing coordinates $\mu^{\rA}$ on $\s_{\OGr(2,11)}$ in Section \ref{S:structure}.

We want  to verify statement (\ref{E:Unondegeneracy}). 
For this we need a description of 
 $\s_{11}$  in terms of the Grassmann algebra \cite{Chevalley}.  Let $P$ and $P^{'}$ be  five-dimensional isotropic subspaces in $\vv^{11}$ such that $P\cap P^{'}=0$. Then
 \begin{equation}\label{E:adecomp}
 \vv^{11}=P+P^{'}+\vv^1, P+P^{'}=\vv^{10}, \vv^{10}\perp\vv^1
 \end{equation}
The bilinear form $(\cdot,\cdot)$ defines a pairing between $P$ and $P^{'}$. The group $\GL(5)$ acts  on $P+P^{'}$ preserving $(\cdot,\cdot)$. It acts  trivially on $\vv^1$.  We interpret this action as an embedding 
 \begin{equation}\label{E:embedding}
 \GL(5)\subset \SO(10)\subset \SO(11).
 \end{equation} The spinor representation $\s_{11}$, when it is restricted on the double cover $\widetilde{\GL}(5)$, is isomorphic to 
 $\Lambda(P^{'})\otimes \det^{\frac{1}{2}}$ (see \cite{Cartan}).  We shall  drop $\det^{\frac{1}{2}}$-factor in the formulae to simplify notations. 
 
$\Gamma$-matrices are the matrix coefficients of a nonzero $\Spin(11)$-intertwiner $\Sym^2\s_{11}\rightarrow V^{11}$, which we call a $\Gamma$-map.
The components of   the $\Gamma$-map are $C$-adjoint to  the multiplication map
  $P^{'}\otimes \Lambda^iP^{'}\rightarrow \Lambda^{i+1}P^{'}$,  to the  contraction map $P\otimes \Lambda^{i+1}P^{'}\rightarrow \Lambda^{i}P^{'}$, and to the map $u|_{\Lambda^{i}P^{'}}=(-1)^i\id$. The action of  $\so_{11}\cong \Lambda^{2}V^{11}$ on $\s_{11}$ is defined  in terms of $\Gamma$-maps. Let  
  \begin{equation}\label{E:basisg}
  \{f_i\}\subset V^{11}\subset Cl(V^{11})
  \end{equation} be an orthonormal basis in $V^{11}$,   $\{\eta_{\alpha}\}$ be a basis in $\s_{11}$ which is compatible with the decomposition (\ref{E:grading}). Then   $f_i\times \eta_{\alpha}\overset{\ddef}{=}\Gamma_{\alpha\gamma}^{i} C^{\gamma\beta}\eta_{\beta}$ and \[f_i\wedge f_{j}\times \eta_{\alpha}=\frac{1}{2}(f_i\times (f_{j}\times \eta_{\alpha})-f_j\times (f_{i}\times \eta_{\alpha}))\overset{\ddef}{=}\Gamma_{ \alpha\gamma ij}C^{\gamma\beta}\eta_{\beta}\]  

Let $e,e'$ be a basis in $U$. The element $e\wedge e'\in \sl_2\subset \so_{11}$ is  nilpotent. By the   elementary representation theory of $\sl_2$ the operator in $\s_{11}$ corresponding to $e\wedge e'\in \so_{11}$   defines an isomorphism between $s^{1}$ and $s^{-1}$. The matrix of this operator is $\Gamma_{ \alpha\gamma ij}C^{\gamma\beta}a^{ij}(U)$. This verifies (\ref{E:Unondegeneracy}).

Let $e_1,e_2, e_3,e_4, e_5\in P$ be linearly independent isotropic vectors in $V^{11}$ such that  $e_1,e_2$ span a two dimensional $ W\in \OGr(2,11)$. 
Then    $e_1\wedge e_2=a^{ij}f_i\wedge f_j$ and   $a^{ij}(W)\Gamma_{\beta ij}^{\alpha }$ is a matrix  of the multiplication operator     on $e_1\wedge e_2$ in $\Lambda[e_1,\dots,e_5]\cong \s_{11}$. 
An explicit description  of the fiber of $\s_{\OGr(2,11)}$ (\ref{E:bundledef}) over $W$ is
\begin{equation}\label{E:imageW}
\s_W\cong e_1\wedge e_2\Lambda[e_3,e_4,e_5].
\end{equation}

\section{The Pl\"{u}cker   embedding of $\OGr(2,11)$ }\label{S:Plucker}
In this section we derive equations that characterize the image 
of the classical Pl\"{u}cker embedding of $\OGr(2,11)$ into $\P(\Lambda^{2}V^{11})$.
Let $e_1\wedge e_2\in \Lambda^2 V^{11}$ be such that  $e_1,e_2$ span an isotropic space $W$ in $V^{11}$.
The following  equations  reflects decomposability and isotropy properties of $e_1\wedge e_2$:
\[
\begin{split}
&(e_1\wedge e_2,e_1\wedge e_2)=\frac{1}{2}\left((e_1,e_1)(e_2,e_2)-(e_1,e_2)(e_1,e_2)\right)=0\\
&e_1\wedge e_2\wedge e_1\wedge e_2=0\\
&\frac{1}{4}\left((e_1,e_1)e_2\circ e_2+(e_2,e_2)e_1\circ e_1-2(e_1,e_2) e_1\circ e_2\right)=0
\end{split}
\]
The symmetric product $e_i\circ e_j$ is an element in $\Sym^2 V^{11}$.
We expand $e$, $e'$ in the orthonormal basis: $e=\sum_{i=1}^{11}a_1^if_i,\ e'=\sum_{i=1}^{11}{a_2}^if_i$. The skew-symmetric matrix $a^{ij}_{e_1,e_2}=a_1^i{a_2}^j-{a_2}^ja_1^i$   
\begin{equation}\label{E:Plucker}
e_1\wedge e_2=\sum_{i,j=1}^{11}a^{ij}f_i\wedge f_{j}
\end{equation} 
is a function of the basis.
\begin{equation}\label{E:scalingdegree}
a^{ij}_{Be_1,Be_2}=\det(B) a^{ij}_{e_1,e_2}, B\in \GL(2,\C).
\end{equation} 
 We see that the coefficients  $a^{ij}=a^{ij}_{e_1,e_2}$ have  $\GL(2,\C)$-scaling degree one. In other words $a^{ij}(W)=a^{ij}_{e_1,e_2}$ are projective Pl\"ucker coordinates of the point $W=\mbox{span}(e_1,e_2)\in \OGr(2,11)$.
The matrix $a^{ij}$ satisfies
\begin{equation}\label{E:Pluckercoord}
\sum_{i,j=1}^{11}a^{ij}a^{ij}=0\quad
a^{[ij}a^{kl]}=0\quad
\sum_{k=1}^{11}a^{ki}a^{kj}=0
\end{equation}
 We verify in \cite{MovGr} that \ref{E:Pluckercoord} are defining equation for $\OGr(2,11)$ in $\P(\Lambda^2 V^{11})$.


\section{$\Spin(10,1,\mathbb{R})$ orbits in $\OGr(2,11)$}\label{S:orbits}
The  complex group $\SO(11)$ and  its compact form $\SO(11,\mathbb{R})$ act transitively on  $\OGr(2,11)$. The action of the  Lorentz group $\SO(10,1,\mathbb{R})$ has two orbits, which we identify presently. 
 
 An isotropic two-dimensional space $W\subset V^{11}=V^{10,1\ \mathbb{R}}\otimes \mathbb{C}$ defines a real space $E(W)=(W+\overline{W})\cap V^{10,1\ \mathbb{R}}$. The two numerical invariants of $E(W)$  are  dimension and  signature $(d(W),\tau(W))$. Invariants $(d(W),\tau(W))$ completely characterize an orbit $O_{d(W),\tau(W)}$.
 There are  two orbits  \[\OGr(2,11)=\bigcup O_{d(W),\tau(W)}=O_{4,4}\cup O_{3,2}\] 
 We leave the proof  that $O_{4,2}$ is empty 
  to the reader. Here is the idea. Let $W\in O_{4,2}$ then $E(W)\cong\R^{2}\times \R^{1,1}$, $W=W\cap \R^{2\ \C}\times W\cap \R^{1,1\ \C}$, but $W\cap \R^{1,1\ \C}$ is a complexification of a real isotropic subspace in $\R^{1,1}$. Hence $\dim E(W)=3$.
  
  The stabilizers of  base  points in $O_{4,4}$ and $O_{3,2}$ have Lie algebras  $\fu_2\times \so_{6,1}$  and $\so_2\times \R \times  \so_7\ltimes \R^{9}$.  The orbit $O_{4,4}$ is dense in $\OGr(2,11)$, the orbit $O_{3,2}$ has  the real codimension seven in $\OGr(2,11)$.
 

\section{A Lie algebra description of the flat solution}\label{S:flat}
A CR structure on a homogeneous space has a simple description in terms of the Lie algebra data. We use this idea to characterize  the odd twistor transform $\cP$ of the flat solution of SUGRA.

We start with a reminder of  how to describe  a $G$-invariant CR structure on a homogeneous space $G/L$ in the Lie algebra terms(see \cite{AzadandHuckleberryandRichthofer} for more details).  
We assume that $G$ is connected. Left-invariant complex vector fields that  belong to the Lie  subalgebra $\mathfrak{p}\subset\mathfrak{g}^{\mathbb{C}}=Lie(G)^{\mathbb{C}}$  define an involutive  subbundle $H^{1,0}$ in $T^{\mathbb{C}}({G})$.  Let us assume that 
\begin{equation}\label{E:norm}
g\mathfrak{p}g^{-1}\subset\mathfrak{p},g\in St
\end{equation}
where $St$ is the stabilizer group of a base point.
 The subbundle $H^{1,0}$ is invariant with respect to the right $St$-translations and   we can push  $H^{1,0}$ to $G/St$. By construction $H^{1,0}_{G/St}$ is involutive. Obviously, all $G$-invariant involutive distribution in  $T^{\mathbb{C}}({G/St})$ can be obtained this way.  The CR structure is nondegenerate if $\p\cap \overline{\p}\subset \mathfrak{st}^{\C}$. This construction has a straightforward generalization to supergroups.

In the flat case the distribution $F$ on  $\R^{11|32}$ is spanned by (\ref{E:vect}), and $\cP$ is equal to (\ref{E:coset}). 
 
 The super-group of symmetries of the SUGRA datum $(\M,F)$ acts by CR transformations of $\cP_{\M,F}$. 

The Lie algebra of $SP$ is
\begin{equation}\label{E:sp}
\so_{10,1}\ltimes \susy
\end{equation}
where $\susy$ is the algebra of supersymmetries. 
As a linear space it is a direct sum of  $\vv^{10,1 \R}$ (the even part) and  $\s_{10,1 \R}$ (the odd part). The only nontrivial bracket is defined by the formula $[\theta,\theta']=\Gamma(\theta,\theta'),\theta,\theta'\in \s_{10,1}$.
The space $\R^{11|32}$ is the group super-scheme corresponding to Lie algebra $\susy$. Vector fields  (\ref{E:vect}) is a basis in the space of odd left-invariant vector fields on $\R^{11|32}$ (see e.g. \cite{DF}). This explains why $F$ is invariant under left $\susy$ translations and infinitesimal rotations by $\so(10,1,\R)$. These symmetries generate the Lie algebra of $SP$.

We conclude that  the super-Poincar\'{e} group $SP$ acts on $\cP$.Vector fields (\ref{E:distgengen}) for any given $a^{ij}(W)$ span an abelian  Lie subalgebra in $\susy$ (cf. formulae \ref{E:commutator},\ref{correctnessofhom}).

The number of orbits of  $SP$  in $\cP$ coincides the number of orbits   $\SO(10,1,\R)$  in $\OGr(2,11)$.  The Lie algebra of the stabilizer  $St$ of the dense orbit $O$ is isomorphic to $\fu_2\times \so_{6,1}$.

The complex  Lie  algebra $\p$, which describes the CR structure on $O$, is isomorphic to the semidirect product $\p_2\ltimes \Pi \t$ where 
\begin{equation}\label{E:pstab}
\p_2=\Ad(\mathfrak{so}_{11})_0+\Ad(\mathfrak{so}_{11})_{-1}+\Ad(\mathfrak{so}_{11})_{-2} \mbox{ see (\ref{E:decomp2}) for the notations}
\end{equation}
and 
$\t=s^{-1}$ is as in  (\ref{E:grading}). It coincides with the span of (\ref{E:distgengen}) when $a^{ij}(W)=a^{ij}(U)$. The linear space $U$ is the same as in Appendix \ref{S:adjoint}.

The space $\cP_{\mathbb{R}^{11|32}\ \mathbb{C}}$ is a homogeneous space of the complexified super-Poincar\'{e} group $SP$. The isotropy subalgebra $\p\subset \so_{11}\ltimes \susy=Lie(SP)$ of a base point $x\in \cP_{\mathbb{R}^{11|32}\ \mathbb{C}}$ is $\p_2\ltimes \Pi \t$ . 

\end{document}